\font\bigblack=msbm10 scaled\magstep 2 
\font\bbigblack=msbm10 scaled\magstep3
\def\bigfield #1{\hbox{{\bigblack #1}}}
\def\bbigfield #1{\hbox{{\bbigblack #1}}}
\def\v #1{\vert #1\vert}             %Para denotar elgrado de #1
\def\m #1 #2{(-1)^{{\v #1} {\v #2}}} %Para denotar el signo (-1)^...
\def\<#1>{\langle#1\rangle}        %  una forma bilineal <x,a>
\def\>#1{{\bf #1}}                %  notacin para vectores
\def\f(#1,#2){\frac{#1}{#2}}
\def\dt2#1{\frac{d^2 #1}{dt^2}}
\def\ea{\varepsilon_a}
\def\G{{\Gamma}}
\def\big R{{\hbox{{\bigfield R}}}}
\def\bbig R{{\hbox{{\bbigfield R}}}}
\def\dim{\hbox{{\rm dim}}}        %%several definitions
\def\id{{\hbox{id}}}                    %  la identidad
\def\ba{\begin{eqnarray}}
	\def\ea{\end{eqnarray}}
\def\be{\begin{equation}}
	\def\ee{\end{equation}}
\def\<#1>{\langle#1\rangle}
\def\di{\bigstar}
\newcommand{\bea}{\begin{eqnarray}}
	\newcommand{\eea}{\end{eqnarray}}
\def\frac#1#2{{#1\over#2}} 
\theoremstyle{plain}
\newtheorem{theorem}{Theorem}
\newtheorem{corollary}[theorem]{Corollary}
\newtheorem{proposition}[theorem]{Proposition}
\newtheorem{lemma}[theorem]{Lemma}
\theoremstyle{definition}
\newtheorem{definition}[theorem]{Definition}
\newtheorem{note}[theorem]{Note}
\def\Hil{{\cal H}}
\def\<#1>{\langle#1\rangle}
\numberwithin{equation}{section}
\numberwithin{theorem}{section}
\begin{document}
	\nocite{*}
	
	\centerline{\Large {\bf Quantum quasi-Lie systems: properties and applications}} \vskip 0.75cm
	
	\centerline{J.F. Cari\~nena$^{\dagger}$, J. de
		Lucas$^{\ddagger}$ and C. Sard\'on$^{*}$} \vskip 0.5cm
	
	\centerline{$^{\dagger}$Department of Theoretical Physics and IUMA, University of Zaragoza,}
	\medskip
	\centerline{c. Pedro Cerbuna 12, 50009, Zaragoza, Spain.}
	\medskip
	\centerline{$^{\ddagger}$Department of Mathematical Methods in Physics, University of Warsaw,}
	\medskip
	\centerline{ul. Pasteura 5, 02-093, Warsaw, Poland.}
	\medskip
	\centerline{$^{*}$ Department of Applied Mathematics, Politechnic University of Madrid (UPM)}
	\medskip
	\centerline{c. José Gutiérrez Abascal 2, 28006, Madrid.}
	\medskip
	
	%	\centerline{Last version: \today.}
	
	\vskip 1cm
	
	\begin{abstract}
		A {\it Lie system} is a non-autonomous system of ordinary differential equations describing the integral curves of a $t$-dependent vector field taking values in a finite-dimensional Lie algebra of vector fields. Lie systems have been generalised in the literature to deal with $t$-dependent Schr\"odinger equations determined by a particular class of $t$-dependent Hamiltonian operators, the {\it quantum Lie systems}, and other differential equations through the so-called {\it quasi-Lie schemes}. This work extends quasi-Lie schemes and quantum Lie systems to cope with $t$-dependent Schr\"odinger equations associated with the here called {\it quantum quasi-Lie systems}. To illustrate our methods, we propose and study a quantum analogue of the classical nonlinear oscillator searched by Perelomov and we analyse a quantum one-dimensional fluid in a trapping potential along with quantum $t$-dependent {Smorodinsky--Winternitz} oscillators.
	\end{abstract}
	
	{\rm Keywords:} Hilbert space; Lie system;  Schr\"odinger equation; {Smorodinsky--Winternitz} oscillator; quasi-Lie scheme; quantum non-linear oscillator.
	
	{\rm MSC:} 46N50, 34A36 (primary); 35Q40, 47D03, 58Z05  (secondary)
	
	\section{Introduction}
	A {\it Lie system} is a non-autonomous system of first-order ordinary differential equations whose general solution can be written via a (generally nonlinear) autonomous function, referred to as {\it superposition rule}, a finite set of particular solutions and constants related to the initial conditions \cite{CGM07,Dissertationes,PW}. For instance, non-autonomous {inhomogeneous}  linear systems of first-order ordinary differential equations, Bernoulli equations, Riccati equations, and matrix Riccati equations are examples of Lie systems \cite{LS21,PW}.

	Lie systems are mathematically interesting due to their geometric and algebraic properties \cite{CGL10,CGM07,LaforWinter}. The {\it Lie--Scheffers theorem} establishes that a Lie system is equivalent to a non-autonomous vector field taking values in a finite-dimensional Lie algebra  {$V$} of vector  {fields,} known {as} a {\it Vessiot--Guldberg Lie algebra} (VG Lie algebra henceforth) of the Lie system \cite{CGM07,Dissertationes}. As a consequence,  the study of Lie systems leads to the investigation of superposition rules through projective foliations on an {appropriate} bundle \cite{CGM07}, generalised distributions \cite{CGM07}, Lie group actions \cite{LaforWinter,PW}, etc. Furthermore, there has been a recent interest in Lie systems possessing VG Lie algebras of Hamiltonian vector fields regarding different geometric structures \cite{BBCHLS,BHLS,CCJL19,CGLS,CLS,GLMV19,HLS,LS21}. 
	
	The Lie--Scheffers theorem implies that being a Lie system is a very restrictive condition \cite{CGL10}. Nonetheless, Lie systems play a relevant r\^ole in mathematics, physics, and control theory, e.g. \cite{Dissertationes} cites more than 200 works on Lie systems and their applications. In particular, the theory of Lie systems provides methods to determine the integrability of certain systems of first-order differential equations \cite{AvCar,CGR01,Dissertationes,CLR07c}. Furthermore, Lie systems have been proved  {to be} very helpful in the study of geometric phases \cite{FloresLucas}, the solution of important nonlinear oscillators \cite{Dissertationes}, the analysis of Wei-Norman equations \cite{CLR07c,CMN12,ChandraSL,CK15}, and the research on problems occurring in quantum mechanics \cite{CCJL19,CLR07c} and biology \cite{BBCHLS,BHLS}. Riccati equations and their generalisations are ubiquitous in physics, occurring for instance in cosmology and financial models \cite{AvCar,BBCHLS,Dissertationes,Levin,sorwin}. Riccati equations appear also in the study of epidemiological models \cite{EFSZ21}. 
	
	A generalisation of the concept of Lie systems has recently been proposed in \cite{CGL08,Emden,Dissertationes}, where the theory of {\it quasi-Lie systems} was developed. This theory investigates non-autonomous systems of first-order differential equations  described by $t$-dependent vector fields, $X$, taking values in a finite-dimensional  {linear} space of vector fields $V$ containing a VG Lie algebra $W\subset V$ such that $[W,V]\subset V$. If $V$ is also a Lie algebra, then $X$ is a Lie system. The Lie algebra $W$ can be integrated to give rise to a local Lie group action {of its associated Lie group}, which can be employed to transform the original system into a Lie system with a VG Lie algebra within $V$ (cf. \cite{CGL08}). This idea allows one to analyse the integrability of non-harmonic  {classical} oscillators, dissipative Milne--Pinney equations, and other systems that cannot be studied through Lie systems \cite{CGL08,Emden,CL08Diss,Dissertationes,CL15}. 
	
	Besides, the theory of Lie systems has been extended to the quantum realm in   { \cite{CCR,CLquantum,Dissertationes,CLR07c,CarRam03}}. A {\it quantum Lie system} is a $t$-dependent Hamiltonian operator $\widehat{H}(t)$ such that ${\rm i}\widehat{H}(t)$ takes values in a finite-dimensional real Lie algebra of skew-Hermitian operators \cite{CCR,CLquantum,Dissertationes,CLR07c}. This allows us to develop powerful techniques to study $t$-dependent Schr\"odinger equations associated with these $t$-dependent Hamiltonian operators \cite{CCJL19,CCR,CLquantum,Dissertationes,CLR07c}. For instance, $t$-dependent Schr\"odinger equations associated with quantum Lie systems can be investigated through Lie systems on Lie groups and the corresponding real Lie algebra of skew-Hermitian operators is a clue to solve them  \cite{Dissertationes}. An example treatable with this aid is the case of quantum $t$-dependent dissipative harmonic oscillators of Calogero--Moser type \cite{Dissertationes}.
	
	To enlarge the field of applications of quantum Lie systems, we propose as a main result a quantum analogue of the theory of quasi-Lie schemes based on the hereupon denominated {\it quantum quasi-Lie schemes}. Every quantum quasi-Lie scheme gives rise to a group of $t$-dependent gauge transformations, the {\it group of the quantum quasi-Lie scheme}, which enables one to transform certain $t$-dependent Schr\"odigner equations into new ones that can be studied through quantum Lie systems. This stands out a particular family of $t$-dependent Schr\"odinger equations, described by the here defined {\it quantum quasi-Lie systems} (QQLS), which cover the $t$-dependent Schr\"odinger equations associated with quantum Lie systems as a particular case. In this manner, quantum quasi-Lie systems  can be investigated via the theory of quantum Lie systems. 
	
	One of the main uses of quantum quasi-Lie schemes is to map a $t$-dependent Hamiltonian operator $-{\rm i}\widehat H(t)$ described through a quantum quasi-Lie scheme into a quantum quasi-Lie system or a simpler $t$-dependent Hamiltonian operator $-{\rm i}\widehat H'(t)$ through an element $U(t)$ of the group of the quantum quasi-Lie scheme. Instead of applying straightforwardly a generic $U(t)$ to transform $-{\rm i}\widehat H(t)$ into a simpler $-{\rm i}\widehat{H}'(t)$, which is operationally complicated and tedious but ubiquitous in the literature, we provide a series of results, e.g. Propositions \ref{Red1} and \ref{Red2}, giving easy criteria to predict some interesting properties of $-{\rm i}\widehat H'(t)$. This will eventually  gives hints on the form of $U(t)$ to simplify $-{\rm i}\widehat H'(t)$ or to ensure that $-{\rm i}\widehat H(t)$ is a quantum quasi-Lie system.
	Mathematically, Propositions \ref{Red1} and \ref{Red2} concern the study of real non semi-simple Lie algebra representations. This is the least studied case in the literature, which mainly focuses upon representations of complex semi-simple Lie algebras. 
	
	As applications, we show that quantum quasi-Lie schemes allow us to study the quantum  {evolution} in problems in which the theory of quantum Lie systems does not apply. In particular, we develop a quantum analogue of the classical anharmonic oscillator formerly  studied by Perelomov \cite{Pe78}. This solves the problem proposed by Perelomov of finding a quantum analogue of his oscillator. 
	Subsequently, we focus on the application of quantum quasi-Lie systems in $t$-dependent Schr\"odinger equations describing a one-dimensional quantum fluid in a $t$-dependent trapping potential \cite{Su97} and a quantum $t$-dependent frequency Winternitz--Smorodynski oscillator. 
	
	The paper is organised  as follows. Section \ref{SLSQM} reviews fundamental notions of differential geometry on Hilbert spaces. Section \ref{Sec:QLS} presents quantum Lie systems and their properties. Section \ref{Sec:QQLS} introduces the theory of quantum quasi-Lie schemes, while the theory of quantum quasi-Lie systems is presented  in Section \ref{Sec:QQLSy}. Section \ref{Sec:TPQQLS} addresses the description of new methods to study $t$-dependent Schr\"odinger equations by means of quantum quasi-Lie schemes. Subsequently, we present an application of the theory of quantum quasi-Lie schemes and systems to quantum nonlinear oscillators with a homogeneous potential in Section \ref{Applications}. Section \ref{minos2}  is devoted to the application of the theory of quantum quasi-Lie schemes to the particular case of homogeneous potentials of degree minus two. In particular, this entails studying quantum $t$-dependent frequency Calogero--Moser systems, quantum fluids in a $t$-dependent homogeneous trapping potential, and $t$-dependent quantum  {Smorodinsky--Winternitz} systems. To conclude, Section \ref{Sec:Sum} contains a summary of the obtained results and a commentary on future research prospects.  
	
	\section{Differential geometry in Hilbert spaces.}\label{SLSQM}
	
	This section surveys the differential geometry of (possibly infinite-dimensional) Hilbert spaces. For further details on the geometry of infinite-dimensional manifolds and the theory of skew-Hermitian operators, we refer to  {\cite{GKM, Ha13,La62,Le13}}.
	
	Let $\mathcal{H}$ be a complex Hilbert space endowed with the topology inherited from the norm $\|\cdot\|$ associated with its scalar product $\langle\cdot |\cdot\rangle$. It is hereupon assumed that $\mathcal{H}$ is separable, as it usually happens in quantum mechanics. 
	The separability of $\mathcal{H}$ {means  that there exists a countable orthonormal basis}. The orthonormal basis, let us say $\{\psi_n\}_{n \in\mathbb{N}}$, gives rise to a global chart 
	$$
	\varphi:\psi\in \mathcal{H}\mapsto \{\mathfrak{ Re}\langle \psi_n|\psi\rangle,\mathfrak{Im}\langle \psi_n|\psi\rangle\}_{n\in\mathbb{N}}\in \ell^2(\mathbb{R})
	$$ 
	onto the real Hilbert space $\ell^2(\mathbb{\mathbb{R}})$ of square summable sequences, which allows us to consider $\mathcal{H}$ as a manifold $\mathcal{H}_{\mathbb{R}}$ modelled over a Hilbert space \cite{CCJL19,Dissertationes}. 
	
	We call {\it derivative} at a point $p$ in an open subset  $U\subset \mathcal{H}$ a linear mapping $D:C^\infty(U)\rightarrow \mathbb{R}$ satisfying that $D(fg)=(Df)g(p)+f(p)(Dg)$ for every $f,g\in C^\infty(U)$. 	{Every $\psi\in \mathcal{H}$ induces a derivative $\dot{\psi}_\phi:C^\infty(U)\rightarrow\mathbb{R}$ at the point $\phi$ in an open subset $U$ of $\mathcal{H}$ given by
		\begin{equation}\label{equat}
			(\dot{\psi}_\phi f)=\frac{d}{dt}\bigg|_{t=0}f(\phi+t\psi),\quad \forall f\in C^{\infty}(U).
	\end{equation}}
	
	{A derivative of the above particular form is called a {\it kinematic tangent vector} with foot point $\phi$. }  The term `kinematic' is used in the literature to distinguish kinematic tangent vectors from {\it operational tangent vectors}, which refer to derivatives on $C^\infty(U)$. There is, in general, no one-to-one relation between both types of tangent vectors on infinite-dimensional manifolds ({see e.g. \cite{BGT18,KM97})}. The term kinematic in kinematic tangent vectors and related notions will be hereafter skipped to simplify the presentation of our results. The {\it tangent space} $T_\phi\mathcal{H}_\mathbb{R}$ at $\phi\in \mathcal{H}_{\mathbb{R}}$ is the space  of all tangent vectors with foot point $\phi$ \cite{Ha13,La62,Le13}. It is immediate to prove, e.g. by inspection of (\ref{equat}) on smooth functions $f_{\widetilde{\psi}}: \widetilde{\phi}\in \mathcal{H}\mapsto \mathfrak{Re}\langle \widetilde{\psi}|\widetilde{\phi}\rangle$ for $\widetilde{\psi}\in \mathcal{H}$, that $\dot \psi_\phi\neq \dot \psi'_\phi$,  for $\psi,\psi',\phi\in \mathcal{H}$, if and only if $\psi\neq \psi'$. Hence, each $T_\phi\mathcal{H}_\mathbb{R}$ is isomorphic to $\mathcal{H}_\mathbb{R}$ for every $\phi\in \mathcal{H}_\mathbb{R}$. The isomorphism associates $\psi\in \mathcal{H}_\mathbb{R}$ with the  tangent vector $\dot{\psi}_\phi\in T_\phi\mathcal{H}_\mathbb{R}$. To simplify the notation, $\dot{\psi}_\phi$ will be denoted by $\dot{\psi}$ if the foot point is known from the context. The space $T_\phi\mathcal{H}_\mathbb{R}$ admits a unique norm turning the isomorphism with $\mathcal{H}_\mathbb{R}$ into an isometry. Similarly, its {\it tangent bundle},  {$T\mathcal{H}_\mathbb{R}:=\bigsqcup_{\phi\in \mathcal{H}} T_\phi\mathcal{H}_\mathbb{R}$}, where $\bigsqcup$ stands for a {\it disjoint union} \cite{Le11,Le13}, is naturally diffeomorphic to $\mathcal{H}_\mathbb{R}\oplus\mathcal{H}_\mathbb{R}$. The space $T\mathcal{H}_\mathbb{R}$ is indeed a vector bundle relative to $\pi:\dot \psi_\phi\in T\mathcal{H}_\mathbb{R}\mapsto \phi\in \mathcal{H}_\mathbb{R}$.  
	
	A {\it  vector field} is a mapping $X:D(X)\subset \mathcal{H}_\mathbb{R}\rightarrow T\mathcal{H}_\mathbb{R}$ from a dense subspace $D(X)\subset\mathcal{H}_\mathbb{R}$ satisfying that $\pi\circ X={\id}_{\mathcal{H}_\mathbb{R}}|_{D(X)}$. For any vector field $X$ on $\mathcal{H}_\mathbb{R}$ and $\phi\in D(X)$, we  denote $X(\phi):=(\phi,X_\phi)\in T_\phi\mathcal{H}_\mathbb{R}\subset T\mathcal{H}_\mathbb{R}\simeq   \mathcal{H}_\mathbb{R}\oplus \mathcal{H}_\mathbb{R}$. %From simplicity and their similar properties for standard vector fields when $\mathcal{H}$ is finite-dimensional, we we will call kinematic vector fields, simply, vector fields.
	
	A vector field $X$ admits
	a {\it flow} if there exists a continuous map $Fl^X:(t,\phi)\in \mathcal{U}\subset \mathbb{R}\times\mathcal{H}_\mathbb{R}\mapsto Fl_t^X(\phi)\in \mathcal{H}_\mathbb{R}$ such that $\{0\}\times \mathcal{H}_{\mathbb{R}}\subset \mathcal{U}$, $Fl^X_0(\phi)=\phi$ for every $\phi\in \mathcal{H}_\mathbb{R}$ and
	$$
	\frac{d}{dt}Fl^{X}(t,\phi)=\lim_{s\rightarrow 0}\frac{Fl^X(t+s,\phi)-Fl^X(t,\phi )}{s}=X(Fl^{X}(t,\phi)),
	$$
	where the limit is relative to the norm topology of $\mathcal{H}_\mathbb{R}$ \cite{Ha13,Sc13}.	
	
	The main objects to be considered hereafter are the  vector fields of the form 
	$$X_{\widehat{A}}:\psi\in D(\widehat{A})\subset \mathcal{H}_{\mathbb{R}}\mapsto \dot{(\widehat{A}\psi)}_{\psi}\in T_\psi\mathcal{H}_{\mathbb{R}}\subset T\mathcal{H}_\mathbb{R}
	$$ 
	for a skew-Hermitian operator $\widehat{A}:D(\widehat{A})\subset \mathcal{H}\rightarrow \mathcal{A}$, where $D(\widehat{A})$ is the domain of the operator $\widehat{A}$. It is worth noting that an operator on an infinite-dimensional Hilbert space only needs to be defined on a dense subspace of the Hilbert space \cite{Ha15}.	Even if $\widehat A$ is not bounded, which implies in virtue of the {\it Hellinger--Toeplitz Theorem}  \cite{Ha13} that $X_{\widehat{A}}$ can only be defined on a dense subset of $\mathcal{H}_\mathbb{R}$, the {\it Stone's Theorem} ensures 
	that $\widehat{A}$ gives rise to a strongly continuous one-parameter group $\{e^{t\widehat{A}}\}_{t\in\mathbb{R}}$ of operators \cite{Stone,Stone2}. %Interestingly, the one-parameter group will be norm continuous if and only if $\widehat A$ is bounded  . 
	
	Previous results ensure that the integral curves of $X_{\widehat{A}}$, namely $t\mapsto e^{t\widehat{A}}\phi$ for every $\phi\in \mathcal{H}$, are defined on the whole $\mathcal{H}_\mathbb{R}$ and $X_{\widehat{A}}$ is complete. This fact can be used to consider skew-Hermitian operators as fundamental vector fields relative to the standard action of the unitary group $U(\mathcal{H})$ on the Hilbert space $\mathcal{H}$, relative to the strong topology. If  $\widehat{A}:=-{\rm i}\widehat H$ for  an autonomous self-Hermitian Hamiltonian $\widehat H$, then the solutions to the associated $t$-dependent Schr\"odinger equation describe the integral curves of $X_{-{\rm i}\widehat{H}}$ \cite{BCG,CLR07c,Ha01}. %\marginpar{khesinbook}

	Assume that $\widehat{B}_1$ and $\widehat{B}_2$ are two, possibly  not bounded, operators acting on $\mathcal{H}$ admitting a common dense domain $D$ that is also invariant under $\widehat{B}_1$ and $\widehat{B}_2$, namely $\widehat{B}_1,\widehat{B}_2$ are defined on $D$ and $\widehat{B}_1(D),\widehat{B}_2(D)\subset D$. The commutator of $\widehat{B}_1$ and $\widehat{B}_2$ is defined to be $[\widehat{B}_1,\widehat{B}_2]\phi=(\widehat{B}_1\widehat{B}_2-\widehat{B}_2\widehat{B}_1)\phi$ for every $\phi\in D$.

	For two vector fields  $X_{\widehat{B}_1},X_{\widehat{B}_2}\in \mathfrak{X}(\mathcal{H}_\mathbb{R})$  satisfying that $\widehat{B}_1,\widehat{B}_2$ are skew-Hermitian operators having a common invariant dense domain, we define 
	$$
	[X_{\widehat{B}_1},X_{\widehat{B}_2}]:=-X_{[\widehat{B}_1,\widehat{B}_2]}.
	$$
	
	The above expression, when the integral curves of $\widehat{B}_1,\widehat{B}_2$ are smooth enough, arises as a consequence of the definition of the geometric Lie bracket of vector fields on a Banach manifold. Let us analyse this fact in detail.	Assume that $X_{\widehat{B}_1},X_{\widehat{B}_2}$ are such that $\widehat{B}_1,\widehat{B}_2$ admit a common dense invariant domain. If the curve $
	\alpha(t):=\exp(-t\widehat{B}_1)\exp(-t\widehat{B}_2)\exp(t\widehat{B}_1)\exp(t\widehat{B}_2)\phi
	$	is smooth at $\phi \in\mathcal{H}$, it is possible to retrieve the Lie bracket of the vector fields $X_{\widehat{B}_1},X_{\widehat{B}_2}$ geometrically \cite[Theorem 1.33]{Olver}:
	\begin{equation*}%\label{LieBrack}
		[[X_{\widehat{B}_1},X_{\widehat{B}_2}]](\phi)=\dfrac 12 \dfrac{\partial^2}{\partial
			t^2}\bigg|_{t=0}\left(Fl^{X_{\widehat{B}_2}}_{-t}\circ Fl^{X_{\widehat{B}_1}}_{-t}\circ Fl^{X_{\widehat{B}_2}}_t\circ
		F^{X_{\widehat{B}_1}}_t(\phi)\right)\!=\!\dfrac 12 \dfrac{\partial^2}{\partial
			t^2}\bigg|_{t=0}e^{-t\widehat{B}_2}e^{-t\widehat{B}_1}e^{t\widehat{B}_2}e^{t\widehat{B}_1}\phi.
	\end{equation*}	
	In particular, the above always holds when $\mathcal{H}$ is finite-dimensional.
	A {\it $t$-dependent vector field} on  $\mathcal{H}_\mathbb{R}$ is a mapping $X:\mathcal{U}\subset \mathbb{R}\times \mathcal{H}_\mathbb{R}\rightarrow T\mathcal{H}_\mathbb{R}$, for a certain open subset  $\mathcal{U}$ in $\mathbb{R}\times \mathcal{H}_{\mathbb{R}}$, such that $\pi\circ X(t,\psi)=\psi$ on ${\mathcal{U}}$ and the domain of each $X_t:\phi\in \mathcal{U}\cap (\{t\}\times \mathcal{H}_\mathbb{R})\mapsto X(t,\phi)\in T\mathcal{H}_\mathbb{R}$, with $t\in \mathbb{R}$, is dense in $ \mathcal{H}_\mathbb{R}$. The {\it integral curves} of $X$ are sections $\gamma:\mathbb{R}\rightarrow \mathbb{R}\times \mathcal{H}_\mathbb{R}$ of the bundle $\pi_\mathbb{R}:\mathbb{R}\times \mathcal{H}_\mathbb{R}\rightarrow \mathbb{R}$ that are simultaneously integral curves of the vector field $\partial_t+X$ on $\mathbb{R}\times \mathcal{H}_{\mathbb{R}}$. If $\pi_2:\mathbb{R}\times \mathcal{H}_\mathbb{R}\rightarrow \mathcal{H}_\mathbb{R}$ is the projection onto the second manifold, the curves $\gamma$ are given by  the solutions to the differential equation
	$$
	\frac{d {(\pi_2\circ\gamma)} }{d t}(t)=X\circ \gamma(t).
	$$
	
	Every $t$-dependent Hermitian operator $\widehat H(t)$ on $\mathcal{H}$ gives rise to a $t$-dependent vector field $X_{-{\rm i}\widehat H(t)}$ on $\mathcal{H}_{\mathbb{R}}$ and an associated $t$-dependent Schr\"odinger equation
	\begin{equation*}%\label{tDH}
		\frac{\partial \psi}{\partial t}=-{\rm i}\widehat{H}(t)\psi,\qquad \psi\in \mathcal{H},\qquad t\in \mathbb{R}.
	\end{equation*}
	%exist for every $t\in \mathbb{R}$ even if $\widehat{H}(t)$ is not defined everywhere for some $t\in \mathbb{R}$. The $t$-dependent Schr\"odinger equation (\ref{tDH}) is indeed the associated equation with the $t$-dependent vector field $X_{-{\rm i}\widehat H(t)}$. 

	\section{Quantum Lie systems}\label{Sec:QLS}
	
	Let us survey the quantum version of Lie systems \cite{Dissertationes,CLR07c}. %This theory is additionally improved by fixing several technical details and other rigorous proofs of facts that used to be assumed to be valid due to physical considerations. 
	On a finite-dimensional manifold $N$, every finite-dimensional real Lie algebra $V$ of vector fields on $N$ can be integrated to give rise to a local Lie group action on $N$. In the infinite-dimensional analogue, additional technical conditions on $V$ may be necessary to integrate $V$. In particular, we are interested in the integration to a Lie group action of finite-dimensional real Lie algebras of vector fields induced by skew-Hermitian operators on (possibly infinite-dimensional) manifolds (see \cite{FSSS72,Ne59,Si72} for details). In all cases studied in this work, $V$ can be found to be integrable to a (at least local) Lie group action. This Lie group action allows us to study quantum Lie systems via  {the standard theory of}  Lie systems \cite{CarRam03,CarRam05b}. To focus on the applications of our ideas, quantum quasi-Lie schemes, and quantum Lie systems to be studied in next sections, further comments on the integrability of $V$ to a Lie group action will be omitted.  
	
	\begin{definition}%\label{Def:QLS}
		A {\it quantum Lie system} is a $t$-dependent operator $\widehat{H}(t)$ on $\mathcal{H}$ of the form
		\begin{equation}\label{introham}
			\widehat H(t):=\sum_{\alpha=1}^rb_{\alpha}(t)\widehat{H}_{\alpha},
		\end{equation}where $b_1(t),\ldots,b_r(t)$ are	$t$-dependent  real functions and $\mathfrak{V}:=\langle {\rm i}\widehat{H}_1,\ldots,{\rm i}\widehat{H}_r\rangle$ is  {a 
			$r$-dimensional}  real Lie algebra of skew-Hermitian operators. We call $\mathfrak{V}$  {a} {\it Vessiot--Guldberg (VG) Lie algebra} for the quantum Lie system (\ref{introham}) \cite{CLR07c,CarRam03,CarRam05b}.
	\end{definition}

	Each quantum Lie system (\ref{introham}) determines a $t$-dependent Schr\"{o}dinger equation 
	\begin{equation}\label{schreq}
		\frac{\partial \psi}{\partial t}=-{\rm i}\widehat{H}(t)\psi=-\sum_{\alpha=1}^rb_{\alpha}(t){\rm i}\widehat{H}_\alpha\psi,\qquad \psi\in \mathcal{H},\qquad t\in \mathbb{R}.
	\end{equation}Its particular solutions are then integral curves of the $t$-dependent vector field $X_{-{\rm i}\widehat{H}(t)}$ (see \cite{CLR07c} for details).

	Let us illustrate quantum Lie systems through an example. Consider the quantum one-dimensional {\it Caldirola{--}Kanai oscillator} \cite{ Ba31, Ca48, Ikot,Ka48}, which is determined by the $t$-dependent operator
	\begin{equation}\label{CK}
		\widehat{H}_{ {\textrm{CK}}}(t):=\exp{(-2\gamma t)}\frac{\widehat{p}^2}{2}+\omega^2 \exp{(2\gamma t)}\frac{\widehat{x}^2}{2},\qquad \gamma\in \mathbb{R},\,\,\omega\in \mathbb{R},
	\end{equation}
	where {\it $\widehat{x}$} and $\widehat{p}$ refer to the  {usual} position and momentum operators on $\mathbb{R}$. This model is a quantum analogue of the classical harmonic oscillator with a $t$-dependent mass $m(t):= \exp{(2\gamma t)}$ and  {a} constant frequency $\omega$  {\cite{CLR07c,GV09}}. 
	
	The $t$-dependent operator $\widehat H_{ {\textrm{CK}}}(t)$ can be written as a linear combination with $t$-dependent real functions  of the Hermitian operators
	$$
	\widehat{H}_1:=\widehat{x}^2/2,\qquad \widehat{H}_2:=\widehat{p}^2/2,\qquad
	\widehat{H}_3:=(\widehat{x}\widehat{p}+\widehat{p}\widehat{x})/4.
	$$
	%Above operators are the well-known generators of the two-dimensional {\it metaplectic representation}, which is a continuous unitary representation of the double covering of ${\rm Sp}(2,\mathbb{R})$, the non-matrix {\it metaplectic group} ${\rm Mp}(2,\mathbb{R})$, on  $\mathcal{L}^2(\mathbb{R})$  \cite{Go11}. 
	The {real linear} space  $\mathfrak{V}_{ {\textrm{CK}}}:=\langle {\rm i}H_1,{\rm i}H_2,{\rm i}H_3\rangle$ is a vector space of skew-Hermitian operators. Additionally,
	\begin{equation*}%\label{commsl}
		[{\rm i}\widehat{H}_1,{\rm i}\widehat{H}_2]=-2{\rm i}\widehat{H}_3,\quad [{\rm i}\widehat{H}_1,{\rm i}\widehat{H}_3]=-{\rm i}\widehat{H}_1,\quad [{\rm i}\widehat{H}_3,{\rm i}\widehat{H}_2]=-{\rm i}\widehat{H}_2.
	\end{equation*}
	Hence, $\mathfrak{V}_{ {\textrm{CK}}}$ is a finite-dimensional Lie algebra  {isomorphic to $\mathfrak{sl}(2, \mathbb{R})$}. Since ${\rm i}\widehat H_{{\textrm{CK}}}(t)$ takes values in $\mathfrak{V}_{ {\textrm{CK}}}$, the $t$-dependent operator $\widehat H_{ {\textrm{CK}}}(t)$ is a quantum Lie system.
	
	The  {solutions of the} $t$-dependent Schr\"odinger equation associated with (\ref{CK}), namely
	\begin{equation*}
		\frac{\partial \psi}{\partial t}=-{\rm i}\widehat H_{ {\textrm{CK}}}(t)\psi=-{\rm i}\left(\exp{(-2\gamma t)}\frac{\widehat{p}^2}{2}+\omega^2 \exp{(2\gamma t)}\frac{\widehat{x}^2}{2}\right)\psi,
	\end{equation*}
	{are} the integral curves of $X_{ {\textrm{CK}}}(t):=\omega^2 \exp{(2\gamma t)}X_{-{\rm i}\widehat{H}_1}+\exp{(-2\gamma t)}X_{-{\rm i}\widehat{H}_2}$.
	
	Similarly to standard Lie systems, the associated $t$-dependent Schr\"odinger equation related to a quantum Lie system can be solved by means of a Lie system on a Lie group \cite{CGM00,Dissertationes}. Let us explain this fact  (see \cite{Dissertationes} for applications).  Reconsider again the quantum Lie system (\ref{introham}), which satisfies that 
	\begin{equation*}%\label{liesyscond}
		[{\rm i}\widehat{H}_\alpha,{\rm i}\widehat{H}_\beta]=\sum_{\gamma=1}^rc_{ {\alpha\beta}}\,^\gamma\ {\rm i}\widehat{H}_\gamma,\qquad
		c_{{\alpha\beta}}\,^\gamma\in\mathbb{R},\qquad \alpha,\beta,\gamma=1,\ldots,r.
	\end{equation*}
	Let $\varphi:\mathfrak{g}\rightarrow \mathfrak{V}$ be an isomorphism of Lie algebras, where $\mathfrak{g}$ is an abstract Lie algebra, $G_\mathfrak{g}$ is the connected and simply connected Lie group related to $\mathfrak{g}$, and $\mathfrak{V}$ is the Lie algebra relative to the commutator of skew-Hermitian operators (more precisely defined on a common domain for all its elements given by a dense subspace $D_\mathfrak{V}\subset\mathcal{H}$). We define a continuous unitary action
	$\Phi:G_{\mathfrak{g}}\times\Hil\to\Hil$   (continuous relative to the norms induced by the one on $\mathcal{H}$ and the one in $G_{ \mathcal{\mathfrak{g}}}$), such that  	\begin{equation}\label{exp}
		\varphi(v)\psi=\frac{d}{dt}\bigg|_{t=0}\Phi({\exp(-tv)},\psi),\qquad \forall \psi\in D_\mathfrak{V},\quad \forall v\in\mathfrak{g}.
	\end{equation}
	Note that $\Phi$ amounts to a Lie group morphism $\widehat{\Phi}:g\in G\mapsto \Phi^g\in  U(\mathcal{H})$ where $\Phi^g(\psi):=\Phi(g,\psi)$ for every $g\in G_\mathfrak{g}$ and $\psi\in \mathcal{H}$, whilst $U(\mathcal{H})$ amounts to the space of unitary operators on $\mathcal{H}$. 
	We define $\Phi_\psi:g\in G_\mathfrak{g}\mapsto  \Phi(g,\psi)\in \mathcal{H}_\mathbb{R}$ for every $\psi\in \mathcal{H}_{\mathbb{R}}$. Note that $\varphi$ is the infinitesimal Lie algebra morphism induced by $\Phi$. Due to the relation $[X_{\widehat{B}_1},X_{\widehat{B}_2}]=-X_{[{\widehat{B}_1},{\widehat{B}_2}]}$, it follows from (\ref{exp}) that we can define $X_{-\varphi(v)}$ to be the fundamental vector field on $\mathcal{H}_\mathbb{R}$ corresponding to the element $v\in\mathfrak{g}$ and  then there exists   a Lie algebra isomorphism   $v\mapsto X_{ -\varphi(v)}$ between the elements of the Lie algebra $\mathfrak{g}$ and the Lie algebra of vector fields $V$ given by the vector fields on $\mathcal{H}_{\mathbb{R}}$ induced by the skew-Hermitian operators $\varphi(v)$ for $v\in \mathfrak{g}$. 
	
	Let $\{{\rm a}_1,\ldots, {\rm a}_r\}$ be the basis of  $T_eG\simeq \mathfrak{g}$ given by ${\rm a}_{\alpha}:=\varphi^{-1}({\rm i}\widehat{H}_\alpha)$ for $\alpha=1,\ldots,r$. Then, ${\rm a}_1,\ldots, {\rm a}_r$ have the same commutation relations, relative to the natural Lie bracket in $T_eG$ (see \cite{AM78}), as the operators ${\rm i}\widehat{H}_1,\ldots, {\rm i}\widehat{H}_r$, which in turn have the opposite structure constants of the vector fields $X_{{\rm i}\widehat{H}_\alpha}$. Hence, if $[{\rm i}\widehat{H}_\alpha,{\rm i}\widehat{H}_\beta]=\displaystyle{\sum_{\gamma=1}^r}c_{\alpha\beta}\,^\gamma\,   {\rm i}\widehat{H}_\gamma$, then $[X_{-{\rm i}\widehat{H}_\alpha},X_{-{\rm i}\widehat{H}_\beta}]=\displaystyle{\sum}_{\gamma=1}^rc_{\alpha\beta}\,^\gamma X_{-{\rm i}\widehat{H}_\gamma}$, where we assume $\alpha,\beta=1,\ldots,r$. Hence,
	\begin{equation*}
		[{\rm a}_\alpha,{\rm a}_\beta]=\sum_{\gamma=1}^rc_{\alpha\beta}\,^\gamma\,{\rm a}_\gamma ,\qquad
		c_{\alpha\beta}\, ^\gamma\in\mathbb{R}, \qquad \alpha, \beta, \gamma=1,\dots, r,%\label{Liealgdef}
	\end{equation*}
	and
	\begin{equation*}%\label{con2}
		\Phi(\exp(-\lambda {\rm a}_\alpha),\psi)=\exp(-{\rm i}\lambda \widehat{H}_\alpha)\psi,\qquad \forall \psi\in D_\mathfrak{V},\quad \forall \lambda\in \mathbb{R},\qquad \alpha=1,\ldots,r.
	\end{equation*}
	It is quite useful in applications to consider $G$ to be a matrix Lie group. In such a case,  the elements of $ \mathfrak{g}$ are matrices and the commutator of the elements of $T_eG$ is the commutator of the corresponding associated elements of the Lie algebra. It is also worth noting that $
	\varphi({\rm Ad}_g(v))={\rm Ad}_{\Phi^g}\varphi(v),$ for all $g\in G$ and $ v\in \mathfrak{g}
	$, where ${\rm Ad}_g$ stands for the adjoint action of $g$ on $\mathfrak{g}$ and ${\rm Ad}_{\Phi^g}$ is the adjoint action of $\Phi^g\in U(\mathcal{H})$ relative to $\varphi(v)$.
	%	holds on  $D_V$ \cite[Lemma 1]{FSSS72}.
	
	%Previous functional analysis comments on the domains of the operators of  quantum  Vessiot--Guldberg Lie algebras fulfil all mathematical details  in the theory of quantum Lie systems appearing in \cite{Dissertationes} without neither changing their final results nor possible applications. 
	
	In particular, let us prove that solving the $t$-dependent Schr\"odinger equation associated with a quantum Lie system  $\widehat H(t)={\displaystyle{\sum_{\alpha=1}^r}} b_\alpha(t) H_\alpha$ reduces to solving the Lie system in $G$ given by
	$$
	\frac{dg}{dt}=-\sum_{\alpha=1}^rb_\alpha(t)X_\alpha^R(g),
	$$ where $X_\alpha^R(g)=R_{g*e}{\rm a}_\alpha$ for $\alpha=1,\ldots,r$ and $R_g:g'\in G\mapsto g'g\in G$. Indeed, if we define $
	\psi(t)=\Phi(g(t),\psi(0))$, with   $\psi(0)\in \mathcal{H}_{\mathbb{R}}$, then for every $t_0\in \mathbb{R}$, one has that 
	%	Recall that for all $\psi\in D_V$ the function $\Phi(g,\psi)$ is smooth and takes values in $D_V$ since $D_V$ is invariant under the operators $\exp(-\lambda{\rm i}H_\alpha)$ by construction. 
	\begin{multline*}
		\frac{d}{dt}\bigg|_{t=t_0}\Phi(g(t),\psi(t_0))=	\frac{d}{dt}\Phi(g(t)g(t_0)^{-1}g(t_0),\psi(t_0))\\=\varphi\left({\frac{dg}{dt}(t_0)g^{-1}}(t_0)\right)\Phi(g(t_0),\psi(0))=-{\rm i}\widehat H(t_0)\psi(t_0),
	\end{multline*}
	and $\psi(t)$ is the general solution to (\ref{schreq}).
	
	%	It is worth noting that $D_V$ is assumed to be the largest common domain of analytic vectors for all the operators in $V$.  As a consequence, if $\phi\in D_V$, then $g\mapsto g\exp(tA)\psi$ is smooth because $\exp(tA)$ is smooth and $\phi\in D_V$. Since $\exp(-tA)$ is also smooth, if $\phi\notin D_V$, then
	%	$\exp(tA)\psi\notin D_V$.

	In the particular case of $\widehat H_{ {\textrm{CK}}}(t)$, which is related to a quantum VG Lie algebra isomorphic to $\mathfrak{sl}(2,\mathbb{R})$, the associated Lie system should be associated with the universal covering {group}, $\widetilde{SL}(2,\mathbb{R})$, of the Lie group $SL(2,\mathbb{R})$, which has a Lie algebra isomorphic to $\mathfrak{sl}(2 {,\mathbb{R}})$.  In consequence, it is defined on the Lie group $G:=\widetilde{SL}(2,\mathbb{R})$ and takes the form 
	\begin{equation}\label{equation}
		\frac{dg}{dt}=-e^{-2\gamma t}X_2^R(g)-\omega^2e^{2\gamma t}X_1^R(g),\qquad g\in \widetilde{SL}(2,\mathbb{R}),\qquad t\in \mathbb{R},
	\end{equation}
	where $X^R_\alpha(g):= {R_{g*e}}{\rm a}_\alpha$ for $\alpha=1,2,3$ and $\{{\rm a}_1,{\rm a}_2,{\rm a}_3\}$ is a basis of $T_{\rm Id}SL_2\simeq \mathfrak{sl}(2,\mathbb{R})$ satisfying the same commutation relations as the ${\rm i}\widehat{H}_1,{\rm i}\widehat{H}_2,{\rm i}\widehat{H}_3$.  
	
	{As} $SL(2,\mathbb{R})$ has a fundamental group isomorphic to $\mathbb{Z}$, its universal covering, $\widetilde{SL}(2,\mathbb{R})$, is not a matrix Lie group, which is complicated to deal with in practical applications (cf. \cite[p. 127]{Ha15}).  
	
	In applications, one is frequently interested in solutions of (\ref{equation})  for the variable $t$ taking values close to zero. By the Ado theorem, every finite-dimensional Lie algebra can be written in a matrix form and, then, it admits a matrix Lie group \cite{Ha15}. Since all Lie groups with the same Lie algebra are {locally} diffeomorphic close to the neutral element, one can define an analogue of (\ref{equation}) on the matrix Lie group $\mathfrak{sl}(2,\mathbb{R})$ given by 	\begin{equation*}%\label{equation2}
		\frac{dg}{dt}=-e^{-2\gamma t}X_2^R(g)-\omega^2e^{2\gamma t}X_1^R(g),\qquad g\in {SL}(2,\mathbb{R}),\qquad t\in\mathbb{R}.
	\end{equation*}

	\section{Quantum quasi-Lie schemes}\label{Sec:QQLS}
	
	This section develops the theory of quantum quasi-Lie schemes as an extension of the theory of quasi-Lie schemes \cite{CGL08, Dissertationes}. Its main aim is to investigate the integrability properties of a class of $t$-dependent Schr\"odinger equations containing the $t$-dependent Schr\"odinger equations studied via quantum Lie systems as a particular case. %Additionally, a powerful criterion to ensure that a $t$-dependent Hamiltonian operator is not a quantum Lie system is provided. 

	%Quantum quasi-Lie systems are the
	%quantum analog of the classical quasi-Lie theory studied in \cite{CGL08} and many of their properties are a natural generalization
	%to the quantum framework.
	The following example illustrates the necessity of quantum quasi-Lie schemes.
	Consider an $n$-dimensional nonlinear quantum oscillator with a $t$-dependent anharmonic potential described by 
	\begin{equation}\label{non}
		\widehat{H}_{NH}(t):=\frac 12\sum_{i=1}^n \widehat {p}_i^2+\omega(t)\frac12\sum_{i=1}^n\widehat{x}^2_i+\tilde{\omega}(t)\sum_{i=1}^n\widehat{x}_i^\alpha,
	\end{equation}where {$\alpha$} is a negative integer, the term $NH$ stands for 'non-harmonic', and  $\omega(t),\tilde{\omega}(t)$ are real $t$-dependent functions such that the points of the curve $(\omega(t),\tilde{\omega}(t))$ in $\mathbb{R}^2$ span the whole space $\mathbb{R}^2$. %As $\omega(t)$ is not constant, ${\rm i}\widehat{H}_t$ takes values in the space $\langle {\rm i}\widehat{H}_1,{\rm i}\widehat{H}_2\rangle$ with ${\rm i}\widehat{H}_1:=\widehat p^2/2$ and ${\rm i}\widehat{H}_2:=\widehat x^3$. The successive commutators of the operators ${\rm i}\widehat{H}_1$ and ${\rm i}\widehat{H}_2$ generate an {infinite-dimensional} linear space.  
	Let us prove that $\widehat{H}_{NH}(t)$ is not a quantum Lie system through the following no-go proposition.
	
	\begin{proposition}{\bf (Quantum Lie systems no-go proposition)} Let $\widehat{H}(t)$ be a $t$-dependent Hermitian operator on $\mathcal{H}$ such that there exist $\widehat{H}_a,\widehat{H}_b\in \langle \widehat H(t)\rangle_{t\in\mathbb{R}}$ and a Hermitian operator $\widehat{H}_c$ on $\mathcal{H}$ satisfying  
		$$
		[{\rm i}\widehat{H}_c,{\rm i}\widehat{H}_a]=c_a{\rm i}\widehat{H}_a,\qquad [{\rm i}\widehat{H}_c,{\rm i}\widehat{H}_b]=c_b{\rm i}\widehat{H}_b ,$$ 
		for constants $c_a\in \mathbb{R}$ and $c_b\in \mathbb{R}\backslash \{0\}$. Let ${\rm pr}:V\subset {\rm End}(\mathcal{H})\rightarrow {\rm End}(\mathcal{H})$ be a 
		Lie algebra morphism from a Lie subalgebra $V$ of the complex Lie algebra ${\rm End}(\mathcal{H})$ of operators on $\mathcal{H}$ containing $\langle {\rm i}\widehat H(t)\rangle_{t\in\mathbb{R}}$ and ${\rm i}\widehat{H}_c$.   
		If  ${\rm ad}^n_{{\rm pr} ({\rm i}\widehat{H}_b)}\pi ({\rm i}\widehat{H}_a)\neq 0$ for every natural number $n> 0$, then $\widehat{H}(t)$ is not a quantum Lie system.   
	\end{proposition}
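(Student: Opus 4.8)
The plan is to argue by contradiction, producing an infinite linearly independent family of operators inside a finite-dimensional space. Suppose $\widehat{H}(t)$ is a quantum Lie system, so that $\mathrm{i}\widehat H(t)$ takes values in a finite-dimensional real Lie algebra $\mathfrak{V}$ of skew-Hermitian operators. Since $\widehat{H}_a,\widehat{H}_b\in\langle\widehat H(t)\rangle_{t\in\mathbb{R}}$, the operators $\mathrm{i}\widehat H_a,\mathrm{i}\widehat H_b$ lie in $\langle \mathrm{i}\widehat H(t)\rangle_{t\in\mathbb{R}}\subseteq\mathfrak{V}$, and because $\mathfrak{V}$ is closed under the commutator of skew-Hermitian operators, the iterated brackets $A_n:=\mathrm{ad}^n_{\mathrm{i}\widehat H_b}(\mathrm{i}\widehat H_a)$ all belong to $\mathfrak V$ for $n\geq 0$. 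The same operators belong to $V$, since $V$ is a Lie subalgebra of $\mathrm{End}(\mathcal H)$ containing $\mathrm{i}\widehat H_a,\mathrm{i}\widehat H_b$ and $\mathrm{i}\widehat H_c$, so that $\mathrm{pr}(A_n)$ is defined for every $n$.

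The core of the argument is an eigenvalue ladder for $\mathrm{ad}_{\mathrm{i}\widehat H_c}$. First I would show by induction on $n$, using only the Jacobi identity and the two hypotheses $[\mathrm{i}\widehat H_c,\mathrm{i}\widehat H_a]=c_a\mathrm{i}\widehat H_a$ and $[\mathrm{i}\widehat H_c,\mathrm{i}\widehat H_b]=c_b\mathrm{i}\widehat H_b$, that
\[
[\mathrm{i}\widehat H_c,A_n]=(c_a+nc_b)\,A_n,\qquad n\geq 0 .
\]
The base case is the first relation; the inductive step follows from $[\mathrm{i}\widehat H_c,[\mathrm{i}\widehat H_b,A_n]]=[[\mathrm{i}\widehat H_c,\mathrm{i}\widehat H_b],A_n]+[\mathrm{i}\widehat H_b,[\mathrm{i}\widehat H_c,A_n]]$, which upgrades the eigenvalue $c_a+nc_b$ to $c_a+(n+1)c_b$. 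Applying the Lie algebra morphism $\mathrm{pr}$ to this identity, and using that $\mathrm{pr}$ commutes with brackets, I obtain $[\mathrm{pr}(\mathrm{i}\widehat H_c),\mathrm{pr}(A_n)]=(c_a+nc_b)\mathrm{pr}(A_n)$ together with $\mathrm{pr}(A_n)=\mathrm{ad}^n_{\mathrm{pr}(\mathrm{i}\widehat H_b)}\mathrm{pr}(\mathrm{i}\widehat H_a)$, which is nonzero for every $n>0$ by hypothesis.

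To finish, I would invoke the classical fact that eigenvectors of a single linear operator associated with pairwise distinct eigenvalues are linearly independent. Since $c_b\neq 0$ the numbers $c_a+nc_b$ are pairwise distinct, so $\{\mathrm{pr}(A_n)\}_{n>0}$ is an infinite linearly independent family. On the other hand all $A_n$ lie in the finite-dimensional space $\mathfrak V$, hence $\mathrm{span}\{A_n:n>0\}$ is finite-dimensional and contained in $V$; its image under the linear map $\mathrm{pr}$ is therefore finite-dimensional and cannot contain an infinite linearly independent set. This contradiction shows that no finite-dimensional $\mathfrak V$ can contain all $\mathrm{i}\widehat H(t)$, i.e. $\widehat H(t)$ is not a quantum Lie system.

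I expect the only genuinely delicate points to be operator-theoretic bookkeeping rather than the algebra: one must ensure that all the iterated commutators $A_n$ and the brackets with $\mathrm{i}\widehat H_c$ are taken on a common dense invariant domain, so that the Jacobi identity and the morphism property of $\mathrm{pr}$ are literally valid in the sense fixed earlier in Section~\ref{SLSQM}. The role of $\mathrm{pr}$ is precisely to make the nonvanishing hypothesis checkable in applications — typically $\mathrm{pr}$ is a projection or a representation that simplifies the computation of $\mathrm{ad}^n_{\mathrm{pr}(\mathrm{i}\widehat H_b)}\mathrm{pr}(\mathrm{i}\widehat H_a)$ — while the morphism property guarantees that passing through $\mathrm{pr}$ preserves both the eigenvalue relation and the obstruction to finite-dimensionality.
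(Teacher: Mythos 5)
Your proposal is correct and follows essentially the same route as the paper: the inductive eigenvalue ladder $[{\rm i}\widehat{H}_c,{\rm ad}^n_{{\rm i}\widehat{H}_b}{\rm i}\widehat{H}_a]=(c_a+nc_b)\,{\rm ad}^n_{{\rm i}\widehat{H}_b}{\rm i}\widehat{H}_a$, the use of the morphism ${\rm pr}$ to guarantee nonvanishing, and the linear independence of eigenvectors with pairwise distinct eigenvalues (possible since $c_b\neq 0$) contradicting finite-dimensionality. The only cosmetic difference is that you run the independence argument on the images ${\rm pr}(A_n)$ and contradict the finite-dimensionality of ${\rm pr}(\mathrm{span}\{A_n\})$, whereas the paper runs it directly on the operators ${\rm i}\widehat{H}_n$ upstairs, using ${\rm pr}$ only to conclude that they are nonzero; both variants are equally valid.
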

	\begin{proof} By induction, we obtain that $[{\rm i}\widehat{H}_c,{\rm ad}^n_{{\rm i}\widehat{H}_b}{\rm i}\widehat{H}_a]=(nc_b+c_a){\rm ad}^n_{{\rm i}\widehat{H}_b}{\rm i}\widehat{H}_a$ for every $n\in \mathbb{N}$.  Since  ${\rm ad}^n_{{\rm pr} ({\rm i}\widehat{H}_b)}\pi ({\rm i}\widehat{H}_a)\neq 0$ by assumption,  $\pi$ is a Lie algebra morphism, and all the ${\rm i}\widehat{H}_{n}:={\rm ad}^n_{{\rm i}\widehat{H}_b}{\rm i}\widehat{H}_a$ with $n\in \mathbb{N}\cup\{0\}$ belong to the domain of $\pi$, then $\widehat H_{n}\neq 0$ for all $n\in \mathbb{N}$  and the operators $\{{\rm i}\widehat{H}_n\}_{n\in \mathbb{N}}$ are eigenvectors with different eigenvalues of the linear morphism ${\rm ad}_{{\rm i}\widehat{H}_c}:E\rightarrow E$ where $E:=\langle {\rm i}\widehat{H}_n\rangle_{n\in\mathbb{N}}$ is an $\mathbb{R}$-linear space. Hence, the $\{{\rm i}\widehat H_{n}\}_{n\in \mathbb{N}}$ are linearly independent and $\dim \,E=+\infty$. In consequence, every Lie algebra of operators containing the ${\rm i}\widehat{H}(t)$ must be infinite-dimensional since it contains ${\rm i}\widehat{H}_a,{\rm i}\widehat{H}_b$ and all the $\{{\rm i}\widehat {H}_n\}_{n\in \mathbb{N}}$.
	\end{proof}
	
	Let us apply the results of the above proposition to  (\ref{non}). Define the Hermitian operators on $\mathcal{L}^2(\mathbb{R}^n)$ of the form 
	$$
	\widehat{H}_a:=\sum_{j=1}^n\widehat x_j^\alpha,\quad \widehat{H}_b:=\sum_{j=1}^n\widehat {p}_j^2,\quad \widehat{H}_c:=\sum_{j=1}^n\frac{(\widehat x_j\widehat p_j+\widehat p_j\widehat x_j)}{2}.
	$$
	%	where they all admit a common invariant domain given by Schwarz smooth functions on $\mathbb{R}^n$.
	
	Let us define 
	$$
	{\rm pr}({\rm i}\widehat{H}_{NH}(t))=\frac {\rm i}2\widehat{p}_1^2+\omega(t)\frac {\rm i}2 \widehat{x}_1^2+{\rm i}\tilde{\omega}(t)\widehat{x}_1^\alpha,\,  {\rm pr}({\rm i}\widehat{H}_c)={\rm i}\frac{(\widehat x_1\widehat p_1+\widehat p_1\widehat x_1)}{2},\,   {\rm pr}({\rm i}\widehat{H}_a)={\rm i}\widehat{x}_1^\alpha,\,    {\rm pr}({\rm i}\widehat{H}_b)={\rm i}\widehat{p}_1^2.
	$$ 
	Since $[{\rm i}\widehat{H}_c,{\rm i}\widehat{H}_a]=\alpha{\rm i}\widehat{H}_a$, $[{\rm i}\widehat{H}_c,{\rm i}\widehat{H}_b]=-2{\rm i}\widehat{H}_b$ and 	$$
	\left[{\rm i}\widehat p_1^2,{\rm i}\widehat x^{\alpha}_1\widehat p_1^k\right]=2\alpha{\rm i}\widehat x^{\alpha-1}_1\widehat p_1^{k+1}+\alpha(\alpha-1)\widehat {x}_1^{\alpha-2}\widehat {p}_1^{k},
	$$
	it follows that $$
	[{\rm pr}({\rm i}\widehat H_b),{\rm pr}({\rm i}\widehat H_a)]=2\alpha{\rm i}\widehat x^{\alpha-1}_1\widehat p_1+\alpha(\alpha-1)\widehat x_1^{\alpha-2},
	$$
	and, by induction and recalling that $\alpha<0$, we obtain
	$$
	{\rm ad}_{{\rm pr}({\rm i}\widehat H_b)}^n(\pi({\rm i}\widehat H_a))=\sum_{k=0}^na_k\widehat x^{\alpha-(n+k)}_1\widehat p_1^{n-k}, 
	$$
	for certain complex constants $a_0,\ldots,a_n$. In particular, it can be proved that  $a_0=2^ni\alpha(\alpha-1)\cdots (\alpha-n+1)$, while $a_n=(-i)^{n-1}\alpha(\alpha-1)\cdots(\alpha-2n+1)$,
	which are different from zero since $\alpha<0$.  Since ${\rm ad}_{{\rm pr}({\rm i}\widehat H_b)}^n{\rm pr}({\rm i}\widehat H_a)\neq 0$, the quantum Lie systems no-go proposition ensures that (\ref{non}) is not a quantum Lie system.
	
	Consider now a  $t$-dependent Schr\"odinger equation
	\begin{equation}\label{QLS}
		\frac{\partial \psi}{\partial t}=-{\rm i}\widehat{H}(t)\psi,\qquad \psi\in \mathcal{H},\qquad t\in \mathbb{R},
	\end{equation}
	where $-{\rm i}\widehat{H}(t)$ is a $t$-dependent operator taking values in a finite-dimensional real linear space $\mathfrak{V}$ of skew-Hermitian operators. 
	%% Let us suppose,
	%in similarity to the classical case of Lie systems, that \eqref{introham} and \eqref{liesyscond} are fulfilled. 
	%As a consequence of \eqref{liesyscond}, we associate the Schr\"{o}dinger equation \eqref{schreq} with the $\R$-linear space $V$ formed by linear combinations of the Hermitian operators
	%$V\equiv\langle {\rm i}\widehat{H}_1,\ldots,{\rm i}\widehat{H}_r\rangle$. 
	Hence, $-{\rm i}\widehat{H}(t)$ can be understood as a curve in $\mathfrak{V}$.
	
	This time we  do not impose the skew-Hermitian operators $\{-{\rm i}\widehat{H}(t)\}_{t\in\mathbb{R}}$ and their successive commutators to span a real finite-dimensional Lie algebra of operators (with respect to the operator commutator as in the case of quantum Lie systems). Instead,  suppose that $-{\rm i}\widehat{H}(t)$ takes values in a finite-dimensional  {real linear}  space of skew-Hermitian operators $\mathfrak{V}$ and we also assume that there exists a non-zero real Lie algebra $\mathfrak{W}\subset \mathfrak{V}$ such that  $[\mathfrak{W},\mathfrak{V}]\subset \mathfrak{V}$. 
	
	\begin{definition} A {\it quantum quasi-Lie scheme}, $S(\mathfrak{W},\mathfrak{V})$, is a pair of non-zero  finite-dimensional $\mathbb{R}$-linear spaces $\mathfrak{W},\mathfrak{V}$ of skew-Hermitian operators on a Hilbert space $\mathcal{H}$ satisfying that
		\begin{equation}\label{con}
			\mathfrak{W}\subset \mathfrak{V},\qquad  [\mathfrak{W},\mathfrak{W}]\subset \mathfrak{W}, \qquad [\mathfrak{W},\mathfrak{V}]\subset \mathfrak{V}.
		\end{equation}
		%		and the elements of $W$ have a common dense invariant domain $D_V$ of analytic vectors in $\mathcal{H}$ invariant under all elements of $V$.
	\end{definition}
	
	%	The condition on domains  is a technicality making possible to prove rigorously facts of the physics literature that are frequently accepted in view of physical considerations. We hereafter assume always that $D_V$ is the largest set of analytic vectors for all the operators of $V$.
	
	To illustrate quasi-Lie schemes, let us consider the $\mathbb{R}$-linear space of skew-Hermitian operators 
	$$
	\mathfrak{V}_{NH}:=\left\langle {\rm i}\widehat{H}_0:={\rm i}\sum_{j=1}^n\widehat {p}_j^2,\,\,{\rm i}\widehat{H}_1:={\rm i}\sum_{j=1}^n\widehat x_j^\alpha,\,\,{\rm i}\widehat{H}_2:={\rm i}\sum_{j=1}^n\widehat x^2_j,\,\,   {\rm i}\widehat{H}_3:={\rm i}\sum_{j=1}^n \frac{(\widehat x_j\widehat p_j+\widehat p_j\widehat x_j)}{2}\right\rangle,
	$$
	for a fixed $\alpha\in \mathbb{N}\backslash\{2\}$ and $\mathfrak{W}_{NH}:=\langle {\rm i}\widehat{H}_2,{\rm i}\widehat{H}_3\rangle$. 
	%The operators ${\rm i}\widehat H_0,{\rm i}\widehat H_2, {\rm i}\widehat H_c$ are the infinitesimal generators of the $n$-dimensional multiplectic representation. %Hence, they share a common dense domain of invariant analytic vectors in $\mathcal{L}^2(\mathbb{R}^n)$. In particular, it can be proved that the functions of the form
	%	$
	%	p({\bf x})e^{-|{\bf x}|^2},
	%	$
	%	where $p({\bf x})$ is a polynomial in the coordinates of ${\bf x}=(x_1,\ldots, x_n)$, are indeed analytic vectors for such operators and they form a dense invariant subspace of $\mathcal{L}^2(\mathbb{R}^n)$. Quite relevantly, the space of smooth vectors for these operators is the Schwarz space $\mathcal{S}(\mathbb{R}^n)$ (see \cite{Ta86} for details). Moreover, the elements of $\mathcal{S}(\mathbb{R}^n)$  are smooth for the whole $V_{NH}$ and $\mathcal{S}(\mathbb{R}^n)$ is dense in $\mathcal{L}^2(\mathbb{R}^n)$.
	Since $[{\rm i}\widehat{H}_2,{\rm i}\widehat{H}_3]=-2{\rm i}\widehat{H}_2$, the linear space $\mathfrak{W}_{NH}$ is a {two-dimensional real} Lie algebra. {We can see that} $[\mathfrak{W}_{NH},\mathfrak{V}_{NH}]\subset \mathfrak{V}_{NH}$ since
	$$
	[{\rm i}\widehat{H}_2,{\rm i}\widehat{H}_1]=0,\qquad [{\rm i}\widehat{H}_2,{\rm i}\widehat{H}_0]=-{\rm i}\widehat{H}_3,\qquad [{\rm i}\widehat{H}_3,{\rm i}\widehat{H}_0]=-2{\rm i}\widehat{H}_0,\qquad [{\rm i}\widehat{H}_3,{\rm i}\widehat{H}_1]=\alpha{\rm i}\widehat{H}_1.
	$$

	It follows that $\mathfrak{W}_{NH}$ and $\mathfrak{V}_{NH}$ satisfy the conditions (\ref{con}) and  {therefore} the pair $\mathfrak{W}_{NH},\mathfrak{V}_{NH}$  gives rise to a quantum quasi-Lie scheme $S(\mathfrak{W}_{NH},\mathfrak{V}_{NH})$. Moreover, $-{\rm i}\widehat{H}_{NH}(t)$ takes values in $\mathfrak{V}_{NH}$. Indeed,
	$$
	\widehat H_{NH}(t)=\frac 12\widehat{H}_0+\omega(t)\frac12\widehat{H}_2+\tilde{\omega}(t)\widehat{H}_1.
	$$

	The last expression will be a clue to study the $t$-dependent Schr\"odinger equation related to $\widehat{H}_{NH}(t)$ through quasi-Lie schemes. Indeed, it suggests us, along with other examples to be studied in the forthcoming sections, to propose the following definition.
	
	% 
	% This quasi-Lie scheme will be applied to certain kind of Schr\"odinger
	% equations
	% of a special type indicated in the next definition.
	
	\begin{definition}\label{Def:CQLS}
		The $t$-dependent Schr\"odinger equation (\ref{QLS}) admits a {\it compatible quantum quasi-Lie scheme} $S(\mathfrak{W},\mathfrak{V})$ if 
		$-{\rm i}\widehat{H}(t)$ is a curve taking values in $\mathfrak{V}$.
		%\begin{equation*}
		%-{\rm i}\widehat{H}_t=-i\sum_{\alpha=1}^rb_\alpha(t)\widehat{H}_\alpha,
		%\end{equation*}
		%with ${\rm i}\widehat{H}_\alpha\in V$ for $\alpha=1,\ldots,r$. 
	\end{definition}
	Let us stress that the idea behind Definition \ref{Def:CQLS} is that given a $t$-dependent Schr\"odinger equation (\ref{QLS}), we can apply the theory of quasi-Lie schemes to study it only via a quasi quasi-Lie scheme that is compatible with it in the sense given above. If not otherwise stated, we hereafter assume that every quasi-Lie scheme and $t$-dependent skew-Hermitian operator is defined on a generic (possibly infinite-dimensional) Hilbert space $\mathcal{H}$.
	
	\section{Quantum quasi-Lie systems}\label{Sec:QQLSy}
	
	Many {of the} works in the literature try to map a certain $t$-dependent operator into a simpler one by means of a gauge transformation (see e.g. \cite{Su97}). Similarly, we introduce quantum quasi-Lie systems in this section as a class of  $t$-dependent Schr\"odinger equations with a compatible quantum quasi-Lie scheme that can be transformed into a $t$-dependent Schr\"odinger equation described via a quantum-Lie system. To achieve this goal, it becomes necessary to extend  some previous results on quantum Lie systems.

	\begin{definition}The {\it representation of a quasi-Lie scheme} $S(\mathfrak{W},\mathfrak{V})$ is a Lie algebra morphism 
		$$
		\begin{array}{rccc}
			\rho&:\mathfrak{W}&\rightarrow &{\rm End}(\mathfrak{V}),\\
			&\widehat{A}&\mapsto &\rho_{\widehat{A}}:=[\widehat{A},\cdot],
		\end{array}
		$$
		where we recall that ${\rm End}(\mathfrak{V})$ is the Lie algebra of endomorphisms on $\mathfrak{V}$ relative to the commutator of endomorphisms. A {\it subrepresentation} of $S(\mathfrak{W},\mathfrak{V})$ is a subspace $\mathfrak{V}_1\subset \mathfrak{V}$ such that $S(\mathfrak{W},\mathfrak{V}_1)$ is a quasi-Lie scheme.% A quantum quasi-Lie scheme $S(\mathfrak{W},\mathfrak{V})$ is {\it reducible} if it has a subrepresentation $\mathfrak{V}_1$ satisfying that $\mathfrak{V}\neq \mathfrak{V}_1$ and $\mathfrak{V}_1\neq \{0\}$. Otherwise, $S(\mathfrak{W},\mathfrak{V})$ will be called {\it irreducible}.
	\end{definition}

	As the subspace $\mathfrak{W}$ of a quasi-Lie scheme $S(\mathfrak{W},\mathfrak{V})$ is a real Lie algebra of skew-Hermitian operators, %admits a common set $D_V$ of analytic vectors that is dense in $\mathcal{H}$,
	there exists a Lie group action $\Phi:(g,\psi)\in G \times \mathcal{H}\mapsto \Phi(g,\psi)\in \mathcal{H}$ such that the mappings $\Phi^g:\psi\in \mathcal{H}\mapsto \Phi(g,\psi)\in \mathcal{H} $, with $g\in G$, are unitary operators on $\mathcal{H}$ and, moreover, one has that %\footnote{Recall that the map $\Phi$ can be represented equivalently in the form $\Phi:G\times\mathcal{H}\rightarrow \mathcal{H}$.}
	$$
	\frac{d}{dt}\bigg|_{t=t_0}\Phi(\exp(tv),\psi)=\widetilde{\rho}(v)\Phi(\exp(t_0v),\psi),\qquad \forall \psi\in \mathcal{H},
	$$
	where $\mathfrak{g}$ is the Lie algebra of $G$, while $\widetilde{\rho}:\mathfrak{g}\rightarrow \mathfrak{W}$ is a Lie algebra isomorphism relative to the commutator operator on elements of $\mathfrak{W}$. %
	%	and $\Phi_\psi:G\rightarrow \mathcal{H}$ is an analytic map for every $\psi\in D_V$. It is worth noting that the theory of analytic vectors explains the existence and differentiability of $\Phi$ even when $\mathcal{H}$ is an infinite-dimensional Hilbert space \cite{Ha13}. 
	Note  {that there} may be other Lie group action $\Phi':G'\times \mathcal{H}_{\mathbb{R}}\rightarrow \mathcal{H}_{\mathbb{R}}$, where the Lie algebra $\mathfrak{g}'$ of $G'$ is isomorphic to $\mathfrak{g}$ and $G'$ is connected, satisfying similar properties. Let us study the relations between such Lie group actions.
	
	Let as consider $\widehat{\Phi}:g\in G\mapsto \Phi^g\in U(\mathcal{H})$ and $\widehat{\Phi}':g\in G'\mapsto \Phi'^g\in U(\mathcal{H})$. Now, we aim to show that $\widehat{\Phi}(G)=\widehat{\Phi}'(G')$.  Due to the fact that $\Phi$ and $\Phi'$ are Lie group actions and $\widetilde{\rho}$, $\widetilde{\rho}\,'$ are their tangent maps at $0$, respectively, one can consider the following commutative diagram
	\begin{equation}\label{Dia}
		\xymatrix{\mathfrak{g}\ar[r]^{\widetilde{\rho}}\ar[d]^{\exp_G}&\mathfrak{W}\ar[d]^\exp&\mathfrak{g}'\ar[l]_{\widetilde{\rho}\,'}\ar[d]^{\exp_{G'}}\\G\ar[r]^{\widehat{\Phi}}&U(\mathcal{H})&G'\ar[l]_{\widehat{\Phi}'}}
	\end{equation}
	where $\exp_G$, $\exp_{G'}$ are the exponential maps from $\mathfrak{g},\mathfrak{g}'$ into the Lie groups $G,G'$, respectively, and $\exp$ stands for the exponential of operators in $\mathfrak{W}$, which exists because the elements of $\mathfrak{W}$ are skew-Hermitian operators.  Since $G$ is connected, every element of $G$ can be written as a product of elements in the image of the $\exp_G$ \cite[p. 224]{Sc94}. Hence, the group generated by the elements of $\widehat{\Phi}(\exp_G(\mathfrak{g}))$ is $\widehat{\Phi}(G)$. Repeating the same argument concerning the right-hand side of diagram (\ref{Dia}), using that it is commutative, and the fact that ${\rm Im}\,\widetilde{\rho}={\rm Im}\,\widetilde{\rho}\,'$, we obtain that $\widehat{\Phi}(G)=\widehat{\Phi}'(G')$.
	Note that $\widehat{\Phi}(G)$ is not a Lie subgroup of $U(\mathcal{H})$ even when $\mathcal{H}$ is finite-dimensional and $U(\mathcal{H})$ becomes a standard Lie group (cf. \cite{CMP17}).  Moreover, the space $\mathcal{G}_\mathfrak{W}$ of curves $U(t)$ in $\Phi(G)$ with $U(0)={\rm Id}_\mathcal{H}$ is a group  relative to the multiplication 
	$$
	(U_1\star U_2)(t)=U_1(t)U_2(t),\qquad U_1(t),U_2(t)\in\mathcal{G}_\mathfrak{W},\qquad \forall t\in \mathbb{R}. 
	$$
	These ideas justify the following definition.%\marginpar{$W$ is $\mathfrak{g}$ and $G$ is the Lie group corresponding to $W$}

	\begin{definition} If $S(\mathfrak{W},\mathfrak{V})$ is a quasi-Lie scheme and $\Phi:G_\mathfrak{W}\times \mathcal{H}\rightarrow \mathcal{H}$ is a Lie group action obtained by integrating the Lie algebra $\mathfrak{W}$, we call {\it group} of the quasi-Lie scheme $S(\mathfrak{W},\mathfrak{V})$ the space $\mathcal{G}_\mathfrak{W}$ of curves $U(t)$ in $\Phi(G)$ with $U(0)={\rm Id}_\mathcal{H}$.
	\end{definition}
	
	Recall that, given a $t$-dependent Schr\"odinger equation (\ref{QLS}), one can define a family of unitary operators $U(t_2,t_1)$, with $t_2,t_1\in \mathbb{R}$, such that, given a certain $\psi_0\in \mathcal{H}$, then $U(t_2,t_1)\psi_0$ is the value of the particular solution $\psi(t)$ to (\ref{QLS}) for $t=t_2$ with initial condition $\psi(t_1)=\psi_0$.
	To simplify the notation, we will call $t$-dependent evolution operator of a Schr\"odinger equation (\ref{QLS}) the $t$-dependent family of unitary operators, $U(t)$,  such that $U(t)=U(t,0)$.

	\begin{proposition}\label{CharQLS} 	The elements of the {\it group} $\mathcal{G}_\mathfrak{W}$ of a quasi-Lie scheme $S(\mathfrak{W},\mathfrak{V})$ are the $t$-dependent evolution operators  of the $t$-dependent Schr\"odinger equations of the form
		\begin{equation}\label{GS}
			\frac{\partial \psi}{\partial t}=-{\rm i}\widehat H_0(t)\psi,\qquad \psi\in \mathcal{H},
		\end{equation}
		for a certain $t$-dependent operator $-{\rm i}\widehat H_0(t)$ taking values in $\mathfrak{W}$. Conversely, the $t$-dependent evolution operator to (\ref{GS}) can be described by a $t$-dependent evolution operator $U(t)$ in $\mathcal{G}_\mathfrak{W}$. 
	\end{proposition}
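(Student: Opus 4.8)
The plan is to establish the claimed set-theoretic equality by proving the two implications separately, the crucial observation being that the subspace $\mathfrak{W}$ of a quasi-Lie scheme $S(\mathfrak{W},\mathfrak{V})$ is itself a finite-dimensional real Lie algebra of skew-Hermitian operators. Consequently, any Schr\"odinger equation (\ref{GS}) whose generator $-{\rm i}\widehat H_0(t)$ takes values in $\mathfrak{W}$ is exactly a quantum Lie system admitting $\mathfrak{W}$ as a VG Lie algebra, so the entire machinery of Section \ref{Sec:QLS} (the integrating action $\Phi:G\times\mathcal{H}\to\mathcal{H}$, the associated Lie system on $G$, and the differentiation formula for $\Phi(g(t),\psi)$) becomes available. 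Both directions then reduce to that machinery together with a path-lifting argument.

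For the converse assertion I would start from an equation (\ref{GS}) with $-{\rm i}\widehat H_0(t)\in\mathfrak{W}$, expand it in a basis of $\mathfrak{W}$, and invoke Section \ref{Sec:QLS} applied to the VG Lie algebra $\mathfrak{W}$: the solution with initial datum $\psi(0)$ is $\psi(t)=\Phi(g(t),\psi(0))$, where $g(t)$ solves the associated Lie system $dg/dt=-\sum_\alpha b_\alpha(t)X^R_\alpha(g)$ on $G$ with $g(0)=e$. Hence the $t$-dependent evolution operator is $U(t)=\Phi^{g(t)}=\widehat\Phi(g(t))$, a curve in $\Phi(G)$ with $U(0)=\widehat\Phi(e)={\rm Id}_\mathcal{H}$, so that $U(t)\in\mathcal{G}_\mathfrak{W}$ by the definition of the group of the scheme.

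For the direct assertion I would take an arbitrary $U(t)\in\mathcal{G}_\mathfrak{W}$ and recover its generator as the right logarithmic derivative $-{\rm i}\widehat H_0(t):=\dot U(t)U(t)^{-1}$, which is meaningful since each $U(t)$ is unitary. To verify that it takes values in $\mathfrak{W}$, I would lift $U(t)$ to a curve $g(t)$ in $G$ with $g(0)=e$ and $\widehat\Phi(g(t))=U(t)$: this is possible because $d\widehat\Phi_e=\widetilde{\rho}$ is injective, indeed an isomorphism onto $\mathfrak{W}$, so $\widehat\Phi$ is an immersion with discrete kernel and $\widehat\Phi:G\to\Phi(G)$ is a covering admitting smooth path-lifting. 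Writing $\xi(t):=\dot g(t)g(t)^{-1}\in\mathfrak{g}$ and differentiating $U(t)\psi=\Phi(g(t),\psi)$ exactly as in the computation of Section \ref{Sec:QLS} yields $\dot U(t)U(t)^{-1}=\widetilde{\rho}(\xi(t))$, which lies in $\mathfrak{W}$ because $\widetilde{\rho}:\mathfrak{g}\to\mathfrak{W}$. Setting $-{\rm i}\widehat H_0(t):=\widetilde{\rho}(\xi(t))$, one checks at once that $\partial_t(U(t)\psi_0)=-{\rm i}\widehat H_0(t)U(t)\psi_0$ with $U(0)={\rm Id}_\mathcal{H}$, so $U(t)$ is the evolution operator of the corresponding equation (\ref{GS}).

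The main obstacle will be the lifting step, because the image $\widehat\Phi(G)=\Phi(G)$ need not be a Lie subgroup of $U(\mathcal{H})$, as remarked after diagram (\ref{Dia}); one therefore cannot simply appeal to an ambient subgroup structure to differentiate curves in $\Phi(G)$. I would circumvent this by equipping $\Phi(G)$ with the quotient topology of $G/\ker\widehat\Phi$, under which $\widehat\Phi$ is a covering map with discrete fibres, and by reading the smoothness of the elements of $\mathcal{G}_\mathfrak{W}$ as strong differentiability, so that $\dot U(t)$, and hence $\xi(t)$ through the lift, are well defined. Once the lift is in hand, the remaining manipulations are identical to those already performed for quantum Lie systems in Section \ref{Sec:QLS}.
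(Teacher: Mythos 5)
Your proof is correct in substance but takes a genuinely different route from the paper's. The paper works locally in $t$ and in canonical coordinates of the second kind: it writes $U(t)=\exp\left({\rm i}f_1(t)\widehat H_1\right)\cdots\exp\left({\rm i}f_r(t)\widehat H_r\right)$ near $t=0$, differentiates this ordered product to see that $\frac{dU}{dt}(t)U^{\dagger}(t)$ is a combination of ${\rm Ad}$-transforms of the basis of $\mathfrak{W}$, hence $\mathfrak{W}$-valued (direct part), and conversely recovers the exponents $f_1,\ldots,f_r$ as local solutions of a Wei--Norman-type system of ODEs obtained by inverting a map $T_t:\mathfrak{W}\rightarrow\mathfrak{W}$ with $T_0={\rm Id}_{\mathfrak{W}}$ (converse part). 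You instead reduce both directions to the group-action machinery of Section \ref{Sec:QLS}: your converse — solve the Lie system on $G$ and set $U(t)=\widehat\Phi(g(t))$ — is clean and arguably preferable to the coordinate computation, since it reuses results the paper has already established; your direct part replaces the local coordinate expression by a global lift of $U(t)$ through $\widehat\Phi$ followed by $\frac{dU}{dt}(t)U(t)^{-1}=\widetilde{\rho}\left(\frac{dg}{dt}(t)g(t)^{-1}\right)\in\mathfrak{W}$. What the paper's method buys is explicitness (its ODEs for the $f_j$ are exactly what gets used in the applications) and purely local requirements; what yours buys is a global-in-$t$ argument and conceptual economy.

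The one step you should be careful about is the lifting. Endowing $\Phi(G)$ with the quotient topology of $G/\ker\widehat\Phi$ is legitimate ($\ker\widehat\Phi$ is discrete because $\widetilde{\rho}$ is an isomorphism), and with that topology $\widehat\Phi$ is indeed a covering onto its image. However, the elements of $\mathcal{G}_\mathfrak{W}$ are curves in $\Phi(G)$ viewed inside $U(\mathcal{H})$, i.e., continuous or differentiable in the strong (or norm) operator sense; since $\Phi(G)$ need not be embedded in $U(\mathcal{H})$ — as the paper notes after diagram (\ref{Dia}) — the quotient topology may be strictly finer than the operator topology restricted to the image, so strong continuity of $U(t)$ does not by itself give the continuity for the quotient topology that the path-lifting theorem requires. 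This is best stated as an explicit hypothesis on what ``curve in $\Phi(G)$'' means rather than left implicit; to be fair, the paper's own proof makes the analogous assumption tacitly when it asserts that $U(t)$ lies in $\widehat\Phi(\mathcal{U})$ for small $t$ and that the coordinate functions $f_j(t)$ in (\ref{curve}) are differentiable, so this caveat does not put your argument below the paper's standard of rigor — it only localises where that standard is being invoked.
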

	\begin{proof}Let $\mathcal{U}$ be an open neighbourhood of the neutral element of the Lie group $G_\mathfrak{W}$, where canonical coordinates of the second-kind can be defined. 
		Consider a curve $U(t)$ in $\mathcal{G}_\mathfrak{W}$. Then, $U(0)={\rm Id}_\mathcal{H}$ and, if  $t$ is so close enough to $0$, let us say $t\in(-\epsilon,\epsilon)$ for a certain real $\epsilon>0$, we can assume that $U(t)$ takes values in $\widehat{\Phi}(\mathcal{U})$ for $t\in(-\epsilon,\epsilon)$, where we recall that $\widehat{\Phi}$ is the morphism of groups
		$\widehat{\Phi}:G_{\mathfrak{W}}\rightarrow U(\mathcal{H})$ related to the Lie group action $\Phi:G_\mathfrak{W}\times \mathcal{H}\rightarrow \mathcal{H}$ induced by the integration of $\mathfrak{W}$. Using canonical coordinates of second-kind in $\mathcal{U}$ in a particular basis $\{{\rm i}\widehat{H}_1,\ldots,{\rm i}\widehat{H}_r\}$ of $\mathfrak{W}$, we obtain that $U(t)$ can be written as  
		\begin{equation}\label{curve}
			U(t)=\exp\left({\rm i}f_1(t) \widehat{H}_1\right)\times\ldots\times\exp\left({\rm i}f_r(t) \widehat{H}_r\right){,}
		\end{equation}
		for certain $t$-dependent real  functions $f_1(t),\ldots, f_r(t)$ vanishing at $t=0$.

		%		If $\psi\in D_V$, then equality (\ref{eq2}) yields that $\exp(t{\rm i}\widehat H_\alpha){\rm i}\widehat H_\beta\exp(-t{\rm i}\widehat H_\alpha)\psi$ is analytic in $t$ for every $\alpha,\beta\in \overline{1,r}$ since $\psi$ is analytic for $\widehat{H}_\alpha,\widehat{H}_\beta$. 
		A short calculation shows that 
		\begin{equation}\label{eq6}
			\frac{dU}{dt}(t)\psi= \sum_{j=1}^r\frac{df_j}{dt}(t){\rm Ad}_{ \prod_{k=1}^{j-1}\exp({\rm i}f_k(t)\widehat H_k) }({\rm i}\widehat H_j)U(t)\psi,\qquad \forall \psi\in \mathcal{H},
		\end{equation}
		where ${\rm Ad}_{\exp({\rm i}f_k(t)\widehat H_k)}\widehat H_j:=\exp({\rm i}f_k(t)\widehat H_k)\widehat H_j\exp(-{\rm i}f_k(t)\widehat H_k)$ for every $j,k=1,\ldots,r$ and $t\in \mathbb{R}$. Note that the exponentials are arranged in ascending order relative to $k$. We also assume that ${\rm Ad}_{\prod_{k=1}^{0}\exp({\rm i}f_k(t)\widehat H_k)}:={\rm Id}_\mathfrak{W}$. As $\mathfrak{W}$ is a {real} Lie algebra, each morphism ${\rm Ad}_{\exp(\lambda {\rm i}\widehat H_j)}$, with $\lambda\in\mathbb{R}$ and $j=1,\ldots,r$, leaves $\mathfrak{W}$ invariant. Moreover, the right-hand side of (\ref{eq6}) shows that 
		$$
		\frac{dU}{dt}(t)U(t)^{-1}=\sum_{j=1}^r\frac{df_j}{dt}(t){\rm Ad}_{ \prod_{k=1}^{j-1}\exp({\rm i}f_k(t)\widehat H_k) }({\rm i}\widehat H_j)  
		$$ becomes a curve $-{\rm i}\widehat{H}_0(t)$ taking values in  $\mathfrak{W}$. Consequently, $U(t)$ determines a curve  $-{\rm i}\widehat{H}_0(t)$ in $ \mathfrak{W}$ and the general solution to its $t$-dependent Schr\"odinger equation  has general solution $U(t)\psi_0$ for $\psi_0\in \mathcal{H}$. Recall that since $f_k(0)=0$, for $k=1,\ldots,r$, the $\psi_0\in \mathcal{H}$ is the initial condition for the particular solution $\psi(t)=U(t)\psi_0$ and $U(0)={\rm Id}_\mathcal{H}$.
		
		Conversely, assume that we are given a curve $-{\rm i}\widehat H_0(t)$  taking values in $\mathfrak{W}$. Let us determine a curve $U(t)$, defined at least for $t$ in a neighbourhood around zero, describing the $t$-dependent evolution operator for (\ref{GS}). Around $t=0$, the curve $U(t)$ can be written in the form (\ref{curve}) because $U(0)={\rm Id}_{\mathcal{H}}$. %Then, $U(t)$ describes the $t$-dependent evolution to (\ref{GS}) if (\ref{eq6}) holds for certain functions $f_1(t),\ldots,f_k(t)$ with $f_1(0)=\ldots=f_r(0)=0$.
		We can define a $t$-dependent family of   mappings of the form $T_t:\mathfrak{W}\rightarrow \mathfrak{W}$, for each $t\in \mathbb{R}$, such that
		$$
		-{\rm i}\sum_{j=1}^r \lambda_j\widehat {H}_j\stackrel{T_t}{\mapsto} -\sum_{j=1}^r\lambda_j{\rm Ad}_{ \prod_{k=1}^{j-1}\exp({\rm i}f_k(t)\widehat H_k ) }({\rm i}\widehat H_j)
		$$
		for every set $\lambda_1,\ldots,\lambda_r\in \mathbb{R}$. 
		The functions $f_1(t),
		\ldots, f_r(t)$ are undetermined now, but $f_1(0)=\ldots=f_r(0)=0$ and then  $T(0)={\rm Id}_{\mathfrak{W}}$. Since $T_t$ depends continuously on $t$, then each particular $T_{t_0}$ will be invertible for $t_0$ in a  {neighbourhood}  of zero. As a consequence, the system of differential equations given by
		$$
		-\sum_{j=1}^r\frac{df_j}{dt}(t){\rm Ad}_{\left(\prod_{k=1}^{j-1}\exp({\rm i}f_k(t)\widehat H_k\right)}({\rm i}\widehat H_j)=-{\rm i}\widehat{H}_0(t)
		$$
		amounts, after applying $T^{-1}_t$, to
		$$
		-{\rm i}\sum_{k=1}^r\frac{df_k}{dt}(t)\widehat {H}_k=T^{-1}_t(-{\rm i}\widehat H_0(t)),
		$$
		which always has a local solution $f_1(t),\ldots,f_r(t)$ for $t$ in a  {neighbourhood} of zero. This allows us to determine the values of $f_1(t),\ldots,f_r(t)$ close to $t=0$ ensuring that $U(t)$ is the $t$-dependent evolution operator of (\ref{GS}). 
		
	\end{proof}

	The main property of a quantum quasi-Lie scheme  is given in the following theorem, which will be employed to simplify $t$-dependent Schr\"odinger equations 'compatible' with quantum quasi-Lie schemes. 
	
	\begin{definition} Let $S(\mathfrak{W},\mathfrak{V})$ be a quasi-Lie scheme and let $U(t)$ be the $t$-dependent evolution operator related to a $t$-dependent skew-Hermitian operator $-{\rm i}\widehat{H}(t)$ taking values in $\mathfrak{V}$. Then, we write $U'(t)_{\di}(-{\rm i}\widehat{H}(t))$, where $U'(t)$ stands for a curve in $\mathcal{G}_{\mathfrak{W}}$, for the $t$-dependent skew-Hermitian operator $-{\rm i}\widehat{H}'(t)$ associated with the evolution operator $(U'\star U)(t)$. We also write $\mathfrak{W}_{\mathbb{R}}$ and $\mathfrak{V}_{\mathbb{R}}$ for the spaces of curves in $\mathfrak{W}$ and $\mathfrak{V}$, respectively.
		
	\end{definition}
	
	\begin{theorem}\label{MainT} {\bf (The main theorem of quantum Lie systems)} Let $S(\mathfrak{W},\mathfrak{V})$ be a quantum quasi-Lie scheme  and let  $-{\rm i}\widehat{H}(t)$ be a curve in  $\mathfrak{V}$. 
		If $U_\mathfrak{W}(t)$ is an element of $
		\mathcal{G}_\mathfrak{W}$,  then
		\begin{equation}\label{action}
			U_\mathfrak{W}(t)_\di(-{\rm i}\widehat{H}(t)):=\widehat A(t)+{\rm Ad}_{U_\mathfrak{W}(t)}(-{\rm i}\widehat H(t)),
		\end{equation}
		where $\widehat {A}(t)$ is the $t$-dependent skew-Hermitian operator  taking values in $\mathfrak{W}$ whose evolution is determined by  $U_\mathfrak{W}(t)$, is an element of $\mathfrak{V}_\mathbb{R}$. If $\psi(t)$ is a particular solution to the $t$-dependent Schr\"odinger equation related to $-{\rm i}\widehat H(t)$, then the curve in $\mathcal{H}$ of the form $\psi'(t)=U_\mathfrak{W}(t)\psi(t)$ is a solution to the Schr\"odinger equation related to $U_\mathfrak{W}(t)_\di(-{\rm i}\widehat{H}(t))$.
	\end{theorem}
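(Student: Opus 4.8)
The plan is to reduce the statement to two ingredients already available: the characterisation of $\mathcal{G}_\mathfrak{W}$ in Proposition \ref{CharQLS}, and the stability condition $[\mathfrak{W},\mathfrak{V}]\subset\mathfrak{V}$ built into the definition of a quantum quasi-Lie scheme. First I would unwind the definition of $U_\mathfrak{W}(t)_\di(-{\rm i}\widehat{H}(t))$: by the definition preceding the theorem it is the skew-Hermitian operator $-{\rm i}\widehat{H}'(t)$ whose $t$-dependent evolution operator is $(U_\mathfrak{W}\star U)(t)=U_\mathfrak{W}(t)U(t)$, where $U(t)$ is the evolution operator of $-{\rm i}\widehat{H}(t)$, so that $\dot U(t)=-{\rm i}\widehat{H}(t)U(t)$. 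By Proposition \ref{CharQLS}, the curve $U_\mathfrak{W}(t)\in\mathcal{G}_\mathfrak{W}$ is the evolution operator of a curve in $\mathfrak{W}$, which is precisely $\widehat{A}(t)$, i.e. $\dot U_\mathfrak{W}(t)=\widehat{A}(t)U_\mathfrak{W}(t)$. Differentiating the product $U_\mathfrak{W}(t)U(t)$ and multiplying on the right by its inverse yields
\[
\frac{d}{dt}\big[U_\mathfrak{W}(t)U(t)\big]\,\big[U_\mathfrak{W}(t)U(t)\big]^{-1}=\widehat{A}(t)+U_\mathfrak{W}(t)(-{\rm i}\widehat{H}(t))U_\mathfrak{W}(t)^{-1},
\]
which is exactly the right-hand side of (\ref{action}); this establishes the formula.

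Next I would prove $\mathfrak{V}$-valuedness. Since $\widehat{A}(t)\in\mathfrak{W}\subset\mathfrak{V}$, it suffices to show that ${\rm Ad}_{U_\mathfrak{W}(t)}(-{\rm i}\widehat{H}(t))\in\mathfrak{V}$. Using canonical coordinates of the second kind as in the proof of Proposition \ref{CharQLS}, I would write $U_\mathfrak{W}(t)$ locally as a finite product of exponentials $\exp({\rm i}f_k(t)\widehat{H}_k)$ with ${\rm i}\widehat{H}_k\in\mathfrak{W}$. Because $\mathfrak{V}$ is finite-dimensional and $[\mathfrak{W},\mathfrak{V}]\subset\mathfrak{V}$, each ${\rm ad}_{{\rm i}\widehat{H}_k}$ restricts to an endomorphism of $\mathfrak{V}$, so ${\rm Ad}_{\exp({\rm i}f_k\widehat{H}_k)}=\exp({\rm ad}_{{\rm i}f_k\widehat{H}_k})$ preserves $\mathfrak{V}$; composing the finitely many factors, ${\rm Ad}_{U_\mathfrak{W}(t)}$ preserves $\mathfrak{V}$. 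A one-line check that conjugation by a unitary operator preserves skew-Hermiticity then confirms that $U_\mathfrak{W}(t)_\di(-{\rm i}\widehat{H}(t))\in\mathfrak{V}_\mathbb{R}$.

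For the transformation of solutions, given $\psi(t)$ solving $\dot\psi=-{\rm i}\widehat{H}(t)\psi$, I would set $\psi'(t)=U_\mathfrak{W}(t)\psi(t)$ and differentiate by the product rule, substituting $\dot\psi=-{\rm i}\widehat{H}\psi$ and $\dot U_\mathfrak{W}=\widehat{A}U_\mathfrak{W}$ and inserting $U_\mathfrak{W}^{-1}U_\mathfrak{W}$:
\[
\dot\psi'=\widehat{A}(t)U_\mathfrak{W}(t)\psi+U_\mathfrak{W}(t)(-{\rm i}\widehat{H}(t))\psi=\Big(\widehat{A}(t)+{\rm Ad}_{U_\mathfrak{W}(t)}(-{\rm i}\widehat{H}(t))\Big)\psi',
\]
recognising $\psi'=U_\mathfrak{W}\psi$ in the last step. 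This is precisely the Schr\"odinger equation associated with $U_\mathfrak{W}(t)_\di(-{\rm i}\widehat{H}(t))$, as required.

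The main obstacle is analytic rather than algebraic: on an infinite-dimensional $\mathcal{H}$ the operators $\widehat{H}(t)$ are generally unbounded, so differentiating products of evolution operators, applying the identity ${\rm Ad}_{\exp B}=\exp{\rm ad}_B$ to unbounded $B$, and commuting $U_\mathfrak{W}(t)$ past $\widehat{H}(t)$ must all be justified on a common dense invariant domain. I would discharge this by invoking the standing hypotheses of the paper (Stone's theorem, a common invariant dense domain $D_\mathfrak{V}\subset\mathcal{H}$, and the integrability of $\mathfrak{W}$ to the action $\Phi$), which guarantee that the curves involved are differentiable and their domains match, so that the algebraic computations above are valid; on a finite-dimensional $\mathcal{H}$ these issues disappear entirely.
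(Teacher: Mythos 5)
Your proposal is correct, and its skeleton coincides with the paper's: unwind the definition of $U_\mathfrak{W}(t)_\di$ via the product evolution operator $U_\mathfrak{W}(t)U(t)$, differentiate and right-multiply by the inverse to get (\ref{action}), then use $[\mathfrak{W},\mathfrak{V}]\subset\mathfrak{V}$ to conclude $\mathfrak{V}$-valuedness. Where you genuinely diverge is the justification that ${\rm Ad}_{U_\mathfrak{W}(t)}$ preserves $\mathfrak{V}$. The paper writes ${\rm Ad}_{U_\mathfrak{W}(t)}(-{\rm i}\widehat H(t))=\sum_{n\geq 0}\frac{1}{n!}\,{\rm ad}^n_{\widehat A(t)}(-{\rm i}\widehat H(t))$ and inducts on $n$; strictly speaking this identity presumes $U_\mathfrak{W}(t)=e^{\widehat A(t)}$, which need not hold, since $U_\mathfrak{W}(t)$ is a time-ordered exponential of the instantaneous generator $\widehat A(t)=\frac{dU_\mathfrak{W}}{dt}(t)U_\mathfrak{W}^\dagger(t)$ and generators at different times need not commute. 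Your factorisation of $U_\mathfrak{W}(t)$ into a finite product $\prod_k\exp({\rm i}f_k(t)\widehat H_k)$ with ${\rm i}\widehat H_k\in\mathfrak{W}$ sidesteps this: each factor acts by $\exp({\rm ad}_{{\rm i}f_k(t)\widehat H_k})$, an honest endomorphism of the finite-dimensional space $\mathfrak{V}$, so the composition preserves $\mathfrak{V}$. (Your "locally, via second-kind coordinates" caveat can be removed: as the paper itself uses in the proof of Theorem \ref{Red2}, connectedness of $G_\mathfrak{W}$ makes every element of $\mathcal{G}_\mathfrak{W}$ a finite product of such exponentials, so the argument is global in $t$.) So on this point your route is tighter than the paper's, at the cost of invoking the structure of $\mathcal{G}_\mathfrak{W}$ rather than a single series. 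Finally, you prove the statement about solutions by explicitly differentiating $\psi'(t)=U_\mathfrak{W}(t)\psi(t)$, whereas the paper leaves it implicit in the definition of $\di$ through evolution operators ($\psi'(t)=U_\mathfrak{W}(t)U(t)\psi_0$ is by construction an integral curve of the transformed equation); these are the same computation read in different orders, and both are valid modulo the domain issues you correctly flag and discharge at the end.
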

	\begin{proof} As $U_\mathfrak{W}(t)$ is the $t$-dependent evolution operator related to the $t$-dependent skew-Hermitian operator $\widehat A(t)$, one has that $\widehat A(t)$ is an element of $ \mathfrak{W}_\mathbb{R}\subset \mathfrak{V}_\mathbb{R}$. %It follows that 
		%	${\rm Ad}_{U_\mathfrak{W}(t)}(-{\rm i}\widehat H(t))\in \mathfrak{V}_\mathbb{R}$. 
		Since $-{\rm i}\widehat{H}'(t):=U_\mathfrak{W}(t)_\di(-{\rm i}\widehat{H}(t))$ is by definition the $t$-dependent skew-Hermitian operator admitting a $t$-dependent evolution operator $U'(t):=U_\mathfrak{W}(t)U(t)$, where $U(t)$ is the $t$-dependent evolution operator for $\widehat{H}(t)$, then
		\begin{multline}\label{Gau}
			\frac{dU'}{dt}(t)U'^\dagger(t)\psi_0=\left[\frac{dU_{\mathfrak{W}}}{dt}(t)U_{\mathfrak{W}}^\dagger(t)+{\rm Ad}_{U_{\mathfrak{W}}(t)}
			\left(\frac{dU}{dt}(t)U^\dagger(t)\right)\right]\psi_0\\=\widehat{A}(t)\psi_0-{\rm Ad}_{U_{\mathfrak{W}}(t)}
			\left({\rm i}\widehat{H}(t)\right)\psi_0
			=\widehat{A}(t)\psi_0+\left(\sum_{n=0}^\infty \frac{1}{n!}{\rm ad}^n_{\widehat{A}(t)}(-{\rm i}\widehat{H}(t))\right)\psi_0,
		\end{multline}
		for every $\psi_0\in \mathcal{H}$. 
		As $S(\mathfrak{V},\mathfrak{W})$ is a quantum Lie system, the curve $-[\widehat{B},
		{\rm i}\widehat{H}(t)]\subset \mathfrak{V}$, for any $\widehat{B}\in \mathfrak{W}$, is a curve in $\mathfrak{V}$ and $-{\rm i}\widehat{A}(t)$ takes values in $\mathfrak{W}$. By induction, ${\rm ad}^n_{\widehat{A}(t)}({\rm i}\widehat{H}(t))$ takes values in $\mathfrak{V}$ for every $n\in \mathbb{N}\cup \{0\}$ and, in view of (\ref{Gau}), one has that 
		$$\frac{dU'}{dt}(t)U'^{\dagger}(t)=-{\rm i}\widehat{H}'(t)$$
		takes values in $ \mathfrak{V}.$
	\end{proof}
	
	Summarising, a quantum quasi-Lie scheme induces a group of $t$-dependent transformations allowing us to transform the $t$-dependent Schr\"odinger
	equation described by a $t$-dependent skew-Hermitian operator taking values in  the linear space $\mathfrak{V}$ of a quantum quasi-Lie scheme $S(\mathfrak{W},\mathfrak{V})$ into 
	another $t$-dependent Schr\"odinger equations related to $t$-dependent skew-Hermitian operators taking values also in $\mathfrak{V}$. A particular relevant case occurs when the quantum quasi-Lie scheme enables us to transform, via an element of $\mathcal{G}_{\mathfrak{W}}$, the initial $t$-dependent Schr\"{o}dinger equation into a final one which can be described by means of the usual
	theory of quantum Lie systems.
	%In this way, the final Schr\"{o}dinger equation can be studied by means of the methods developed in the theory of quantum Lie systems. 
	This motivates the next definition.
	
	\begin{definition} The $t$-dependent skew-Hermitian operator $-{\rm i}\widehat{H}(t)$ is a {\it quantum quasi-Lie system} with respect to the quantum quasi-Lie scheme
		$S(\mathfrak{W},\mathfrak{V})$ if $-{\rm i}\widehat{H}(t)$ takes values in $\mathfrak{V}$ and there exists a curve $U(t)$ in $ \mathcal{G}_\mathfrak{W}$ such that $
		U(t)_{\di}(-{\rm i}\widehat{H}(t))$ is a quantum Lie system.
	\end{definition}
	
	To simplify the notation, we will call quantum quasi-Lie system and quantum Lie system the $t$-dependent Schr\"odinger equations related to a quantum quasi-Lie system or a quasi-Lie system, respectively. 
	
	\section{Transformation properties of quantum quasi-Lie schemes}\label{Sec:TPQQLS}
	The group, $\mathcal{G}_{\mathfrak{W}}$, of the quantum quasi-Lie scheme $S(\mathfrak{W},\mathfrak{V})$ allows us to map every $t$-dependent skew-Herminitan operator $-{\rm i}\widehat{H}(t)$ taking values in $\mathfrak{V}$ into new ones taking values in $\mathfrak{V}$. This fact {may be}  helpful in integrating or, at least, simplifying $t$-dependent Schr\"odinger equations  related to $t$-dependent Hamiltonian operators \cite{Su97}. This section depicts new easily implementable techniques to accomplish such a study.

	\begin{proposition}\label{Red1} If $S(\mathfrak{W},\mathfrak{V})$ is a quantum quasi-Lie scheme, $-{\rm i}\widehat{H}(t)$ takes values in $\mathfrak{V}$,  and  $P:\mathfrak{V}\rightarrow \mathfrak{W}$ is any projection of $\mathfrak{V}$ onto $\mathfrak{W}$, then there exists $U(t)$ in $ \mathcal{G}_\mathfrak{W}$ such that  $P[U(t)_\di (-{\rm i}\widehat{H}(t))]=0$ at least for $t$ in an open  {neighbourhood} of $0$ in $\mathbb{R}$.
	\end{proposition}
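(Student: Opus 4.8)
The plan is to reduce the stated existence claim to a standard local existence theorem for a finite-dimensional system of ordinary differential equations governing a curve $U(t)$ in $\mathcal{G}_\mathfrak{W}$. First I would unwind the definition of $U(t)_\di(-{\rm i}\widehat{H}(t))$: by Theorem \ref{MainT} it equals $\widehat{A}(t)+{\rm Ad}_{U(t)}(-{\rm i}\widehat{H}(t))$, where $\widehat{A}(t)=\frac{dU}{dt}(t)U(t)^{-1}$ is the $t$-dependent generator of $U(t)$ and takes values in $\mathfrak{W}$. Since $P$ is a projection of $\mathfrak{V}$ onto $\mathfrak{W}$, one has $P(\widehat{A}(t))=\widehat{A}(t)$; moreover, Theorem \ref{MainT} (using $[\mathfrak{W},\mathfrak{V}]\subset\mathfrak{V}$) guarantees ${\rm Ad}_{U(t)}(-{\rm i}\widehat{H}(t))\in\mathfrak{V}$, so $P({\rm Ad}_{U(t)}(-{\rm i}\widehat{H}(t)))$ is a well-defined element of $\mathfrak{W}$. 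Hence the desired condition $P[U(t)_\di(-{\rm i}\widehat{H}(t))]=0$ is equivalent to the single equation $\widehat{A}(t)=-P({\rm Ad}_{U(t)}(-{\rm i}\widehat{H}(t)))$ in $\mathfrak{W}$.

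Next I would make this equation explicit using the parametrisation of $\mathcal{G}_\mathfrak{W}$ by canonical coordinates of the second kind already employed in the proof of Proposition \ref{CharQLS}. Fixing a basis $\{{\rm i}\widehat{H}_1,\ldots,{\rm i}\widehat{H}_r\}$ of $\mathfrak{W}$ and writing $U(t)=\exp({\rm i}f_1(t)\widehat{H}_1)\cdots\exp({\rm i}f_r(t)\widehat{H}_r)$ with $f_1(0)=\cdots=f_r(0)=0$, equation (\ref{eq6}) yields $\widehat{A}(t)=T_t(\sum_{j=1}^r\dot{f}_j(t)\,{\rm i}\widehat{H}_j)$, where $T_t\colon\mathfrak{W}\to\mathfrak{W}$ is the linear map of that proof satisfying $T_0={\rm Id}_\mathfrak{W}$. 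Substituting into the equation above and applying $T_t^{-1}$, the vanishing condition becomes the first-order system
\[
\sum_{j=1}^r\dot{f}_j(t)\,{\rm i}\widehat{H}_j=-T_t^{-1}\Big(P\big({\rm Ad}_{U(t)}(-{\rm i}\widehat{H}(t))\big)\Big),\qquad f_1(0)=\cdots=f_r(0)=0,
\]
whose right-hand side depends on $(t,f_1,\ldots,f_r)$ through $U(t)$, $T_t$, and $\widehat{H}(t)$.

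The essential observation making this tractable, despite $\mathcal{H}$ being infinite-dimensional, is that $\mathfrak{W}$ and $\mathfrak{V}$ are finite-dimensional: the operators $g\mapsto{\rm Ad}_{U}|_\mathfrak{V}$, the projection $P$, and $T_t^{-1}$ all act on finite-dimensional spaces, so the displayed system is a genuine finite-dimensional ODE with coefficients depending smoothly on $(t,f_1,\ldots,f_r)$ near the origin. Applying the Picard--Lindel\"of theorem then produces a unique local solution $f_1(t),\ldots,f_r(t)$ on an open neighbourhood of $0$, and the associated $U(t)$ is the sought element of $\mathcal{G}_\mathfrak{W}$.

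I expect the main point requiring care to be the invertibility of $T_t$ along the solution, which is exactly what makes the right-hand side well-defined. This is settled as in Proposition \ref{CharQLS}: since $T_0={\rm Id}_\mathfrak{W}$ and $T_t$ depends continuously on $(t,f_1,\ldots,f_r)$, it is invertible on a neighbourhood of the origin; because the solution starts at $f=0$ for $t=0$, it remains within this neighbourhood for small $t$, which is precisely what the local statement requires.
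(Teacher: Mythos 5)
Your proposal is correct, and its first half coincides exactly with the paper's: you apply $P$ to the transformation formula (\ref{action}), use that $\widehat{A}(t)=\frac{dU}{dt}(t)U^{\dagger}(t)$ lies in $\mathfrak{W}$ and is therefore fixed by $P$, and reduce the claim to local solvability of $\widehat{A}(t)=P({\rm Ad}_{U(t)}({\rm i}\widehat{H}(t)))$, which is the paper's equation (\ref{Prob}). Where you diverge is in how local solvability is established. The paper transfers the problem to an abstract finite-dimensional Lie group $G_{\mathfrak{W}}$ integrating $\mathfrak{W}$: via an isomorphism $\Theta:\mathcal{V}\rightarrow\mathfrak{V}$ and the induced projector $\mathcal{P}$, it writes the intrinsic equation $R_{g^{-1}*g}\frac{dg}{dt}=\mathcal{P}({\rm Ad}_g a(t))$ on $G_{\mathfrak{W}}$, invokes standard local existence for ODEs on a finite-dimensional manifold, and then checks (through the action $\Phi$ and an equivariance computation) that $U(t)=\Phi^{g(t)}$ solves (\ref{Prob}). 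You instead stay in the operator picture and reuse the machinery of Proposition \ref{CharQLS}: the second-kind canonical coordinates $U(t)=\exp({\rm i}f_1(t)\widehat{H}_1)\cdots\exp({\rm i}f_r(t)\widehat{H}_r)$, formula (\ref{eq6}), and the map $T_t$, turning (\ref{Prob}) into an explicit ODE system for $(f_1,\ldots,f_r)$ solved by Picard--Lindel\"of. Both arguments rest on the same pillar, the finite dimensionality of $\mathfrak{W}$ and $\mathfrak{V}$ (note that smoothness of your right-hand side in the $f_j$ uses ${\rm Ad}_{\exp({\rm i}f_k\widehat{H}_k)}|_{\mathfrak{V}}=e^{f_k\,{\rm ad}_{{\rm i}\widehat{H}_k}}|_{\mathfrak{V}}$, an identity for possibly unbounded operators that the paper also uses freely, e.g.\ in (\ref{Gau})). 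What each buys: your route is more elementary and self-contained given Proposition \ref{CharQLS}, producing a concrete coordinate system one could in principle integrate, at the cost of the chart-dependent issue you correctly flag, the invertibility of $T_t$ near $t=0$; the paper's route avoids any such degeneracy question by working intrinsically on the group manifold, at the cost of the auxiliary structures $\Theta$, $\mathcal{P}$ and the verification that group-level solutions push forward to operator-level solutions.
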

	\begin{proof} An element $U(t)$ in $\mathcal{G}_\mathfrak{W}$ acts on $-{\rm i}\widehat H(t)$ taking values in $ \mathfrak{V}$ in the form (\ref{action}). Applying the projection $P$ in both sides of (\ref{action}) and using that $d U(t)/dt(t)U^{\dagger}(t)$ takes values in $\mathfrak{W}$ for every $t\in \mathbb{R}$, one obtains
		$$
		P[U(t)_\di (-{\rm i}\widehat H(t))]=P\left(\frac{dU}{dt}(t) U^{\dagger}(t)-{\rm Ad}_{U(t)}({\rm i}\widehat H(t))\right)=\frac{dU}{dt}(t) U^{\dagger}(t)-P({\rm Ad}_{U(t)}({\rm i}\widehat H(t))).
		$$
		Consider the differential equation {for $U(t)$} of the form
		\begin{equation}\label{Prob}
			\frac{dU}{dt}(t)U^\dagger(t)=P({\rm Ad}_{U(t)}{\rm i}\widehat H(t)).
		\end{equation}
		Any particular solution $U(t)$ of this equation with $U(0)={\rm Id}_\mathcal{H}$ will map $-{\rm i}\widehat H(t)$ into a new $t$-dependent skew-Hermitian operator $-{\rm i}\widehat H'(t):=U(t)_\di(-{\rm i}\widehat{H}(t))$ such that $P[U(t)_\di (-{\rm i}\widehat H(t)]=0$. 
		
		Let us consider a linear space, $\mathcal{V}$, isomorphic to $\mathfrak{V}$ via $\Theta:\mathcal{V}\rightarrow \mathfrak{V}$. Consider also $\mathcal{W}\subset \mathcal{V}$ to be the linear subspace $\mathcal{V}:=\Theta^{-1}(\mathfrak{W})$. Then,  $\mathcal{W}$ inherits via $\Theta$ a Lie algebra structure that makes that $\mathcal{W}$ can be considered as the Lie algebra of a Lie group  $G_\mathfrak{W}$.  
		The isomorphism $\Theta:\mathcal{V}\rightarrow \mathfrak{V}$  allows us to define a projector $\mathcal{P}:\mathcal{V}\rightarrow \mathcal{V}$ onto $\mathcal{W}$ of the form $\mathcal{P}:=\Theta^{-1}\circ P\circ \Theta$. 
		Consider %the commutative diagram in the proof of Theorem \ref{Rep} and 
		the system of differential equations on $G_\mathfrak{W}$ of the form
		\begin{equation}\label{Eq:Rel}
			R_{g^{-1}*g}\frac{dg}{dt}=\mathcal{P}({\rm Ad}_ga(t)),
		\end{equation}
		where $a(t):=\Theta^{-1}({\rm i}\widehat{H}(t))$. Let us prove that the solution to this system {satisfying that $g(0)=e$ }  is such that the related $U(t)$ is a solution to (\ref{Prob}). Let $\Phi:G_{\mathfrak{W}}\times\mathcal{H}\rightarrow \mathcal{H}$ be the  Lie group action induced by the Lie algebra isomorphism  $\Theta|_{\mathcal{W}}:\mathcal{W}\rightarrow\mathfrak{W}$. If we define $U(t)=\Phi^{g(t)}$ and $\psi\in \mathcal{H}$, we have that
		$$
		\frac{d}{dt}\bigg|_{t=t_0}\!\!\!U(t)\psi=\frac{d}{dt}\bigg|_{t=t_0}\!\!\!\Phi(g(t)g^{-1}(t_0),\Phi(g(t_0),\psi))=\Theta(\mathcal{P}({\rm Ad}_{g(t_0)}a(t_0)))U(t_0)\psi.
		$$
		Since $\Theta({\rm Ad}_ga(t))={\Phi^g}{\rm i}\widehat{H}(t)\Phi^{g^{-1}}$, we get from the definition of $\mathcal{P}$ that
		$$
		P({\Phi^g}{\rm i}\widehat{H}(t)\Phi^{g^{-1}})=P(\Theta({\rm Ad}_ga(t)))=\Theta(\mathcal{P}({\rm Ad}_ga(t))).
		$$
		Hence,
		$$
		\frac{d}{dt}\bigg|_{t=t_0}U(t)\psi=\Theta(\mathcal{P}({\rm Ad}_ga(t_0)))U(t_0)\psi=P(U(t){\rm i}\widehat{H}(t)U^{-1}(t))U(t)\psi.
		$$
		Since the differential equation (\ref{Eq:Rel}) admits a local solution for every initial condition, {the same is true for the differential equation (\ref{Prob})}.
	\end{proof}
	
	If $\mathfrak{V}$ is a Lie algebra, namely $-{\rm i}\widehat H(t)$ is a quantum Lie system, then we can chose $\mathfrak{W}=\mathfrak{V}$ and the previous proposition ensures that there exists $U(t)$ in $ \mathcal{G}_\mathfrak{W}$ such that $-U(t)_\di{\rm i}\widehat{H}(t)=0$. Hence, $U^{-1}(t)\psi$, for any $\psi\in \mathcal{H}$ becomes the general solution to the $t$-dependent Schr\"odinger equation related to $-{\rm i}\widehat{H}(t)$.	
	
	Let us use  Proposition \ref{Red1}  to simplify or to integrate  $-{\rm i}\widehat{H}(t)$ taking values in $\mathfrak{V}$ when $\mathfrak{V}$ is not a Lie algebra. In such a case, if $-{\rm i}\widehat{H}(t)=\displaystyle{\sum_{\alpha=1}^{r+s}}b_\alpha(t){\rm i}H_\alpha$ for certain $t$-dependent functions $b_1(t),\ldots,b_{r+s}(t)$ and a basis $\{{\rm i}H_1,\ldots,{\rm i}H_{r+s}\}$ of $\mathfrak{V}$ such that the first $r$ elements form a basis for $\mathfrak{W}$, then Proposition \ref{Red1} ensures that there exists an $U(t)$ in $\mathcal{G}_\mathfrak{W}$ such that $U(t)_\di {\rm i}\widehat{H}(t)=\displaystyle{\sum_{\alpha=r+1}^{r+s}}b_\alpha(t){\rm i}H_\alpha$. This simplifies the expression of the initial $t$-dependent Schr\"odinger equation related to $-{\rm i}\widehat{H}(t)$.
	
	To provide further techniques to integrate $t$-dependent Schr\"odinger equations determined by a $t$-dependent skew-Hermitian operator $-{\rm i}\widehat{H}(t)$ in $\mathfrak{V}_\mathbb{R}$ when $\mathfrak{V}$ is not a Lie algebra, let us extend to the quantum case some structures defined for standard quasi-Lie schemes in \cite{CL15}. %Additionally, we  introduce a particular type of quasi-Lie scheme, the {\it reducible quasi-Lie schemes}, which permit us to simplify the $t$-dependent skew-Hermitian operators associated with them.

	The result of the following theorem allows us to determine when a $t$-dependent skew-Hermitian  operator, $-{\rm i}\widehat{H}(t)$, taking values in the  space $\mathfrak{V}$ of a quantum quasi-Lie scheme $S(\mathfrak{W},\mathfrak{V})$ cannot be mapped into zero  via an element $U_\mathfrak{W}(t)$ of the group $\mathcal{G}_\mathfrak{W}$ of the quantum quasi-Lie scheme.  In this case, we will use a quantum quasi-Lie scheme  to study the form of $-U(t)_\di{\rm i}\widehat H(t)$ for each $U(t)$ in $ \mathcal{G}_\mathfrak{W}$.
	
	%\bc{Toda esta secci\'on debe ser reescrita con $U(t,t_0)$}
	
	\begin{theorem}\label{Red2} Let $S(\mathfrak{W},\mathfrak{V})$ be a quantum quasi-Lie scheme with a  subrepresentation $\mathfrak{V}_1$ of codimension one, namely $\dim \mathfrak{V}/\mathfrak{V}_1=1$, and let $\tau_{\mathfrak{V}_1}:\mathfrak{V}\rightarrow \mathfrak{V}/\mathfrak{V}_1$ be the canonical projection onto $\mathfrak{V}/\mathfrak{V}_1$. 
		If $-{\rm i}\widehat H(t)$ belongs to $\mathfrak{V}_\mathbb{R}$ and it is  such that $\tau_{\mathfrak{V}_1}(-{\rm i}\widehat H(t))\neq 0$ and $\tau_{\mathfrak{V}_1}(\mathfrak{W})=0$, then $\tau_{\mathfrak{V}_1}(-U(t)_\di {\rm i}\widehat H(t))\neq 0$ for every $U(t)$ in $ \mathcal{G}_\mathfrak{W}$.
	\end{theorem}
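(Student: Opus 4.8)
The plan is to exploit the explicit description of the $\di$-action furnished by Theorem \ref{MainT}, namely
$$
U(t)_\di(-{\rm i}\widehat H(t)) = \widehat A(t) + {\rm Ad}_{U(t)}(-{\rm i}\widehat H(t)),
$$
in which $\widehat A(t)$ is the element of $\mathfrak{W}_\mathbb{R}$ whose evolution operator is $U(t)$, and then to apply the projection $\tau_{\mathfrak{V}_1}$ to both sides. Since $\mathfrak{V}_1$ is a subrepresentation, $S(\mathfrak{W},\mathfrak{V}_1)$ is a quasi-Lie scheme and in particular $\mathfrak{W}\subset\mathfrak{V}_1$, which is also the content of the hypothesis $\tau_{\mathfrak{V}_1}(\mathfrak{W})=0$; thus $\tau_{\mathfrak{V}_1}(\widehat A(t))=0$ and the whole problem reduces to controlling $\tau_{\mathfrak{V}_1}\circ{\rm Ad}_{U(t)}$.

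The crucial structural fact is that $\mathfrak{V}_1$ is invariant under the adjoint action of $\mathcal{G}_\mathfrak{W}$. First I would observe that, again because $\mathfrak{V}_1$ is a subrepresentation, $[\mathfrak{W},\mathfrak{V}_1]\subset\mathfrak{V}_1$, so that ${\rm ad}_{\widehat B}$ leaves the finite-dimensional subspace $\mathfrak{V}_1$ invariant for every $\widehat B\in\mathfrak{W}$, and hence ${\rm Ad}_{\exp(\widehat B)}=\exp({\rm ad}_{\widehat B})$ does as well. For each fixed $t$, an element $U(t)\in\mathcal{G}_\mathfrak{W}$ lies in $\widehat\Phi(G_\mathfrak{W})$ with $G_\mathfrak{W}$ connected, so it is a finite product of exponentials of elements of $\mathfrak{W}$, exactly as in the proof of Proposition \ref{CharQLS}; consequently ${\rm Ad}_{U(t)}$ preserves $\mathfrak{V}_1$ for every $t$.

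This invariance lets ${\rm Ad}_{U(t)}$ descend to a well-defined endomorphism $\overline{{\rm Ad}_{U(t)}}$ of the quotient $\mathfrak{V}/\mathfrak{V}_1$, uniquely determined by $\tau_{\mathfrak{V}_1}\circ{\rm Ad}_{U(t)}=\overline{{\rm Ad}_{U(t)}}\circ\tau_{\mathfrak{V}_1}$. As ${\rm Ad}_{U(t)}$ is invertible on $\mathfrak{V}$, with inverse ${\rm Ad}_{U(t)^{-1}}$ (and $U(t)^{-1}$ again belonging to the group $\widehat\Phi(G_\mathfrak{W})$, hence also preserving $\mathfrak{V}_1$), the induced map on the quotient is invertible too. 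Since $\dim\mathfrak{V}/\mathfrak{V}_1=1$, an invertible endomorphism of this one-dimensional space is multiplication by a nonzero scalar $c(t)\neq0$.

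Assembling the pieces, applying $\tau_{\mathfrak{V}_1}$ to the displayed formula gives
$$
\tau_{\mathfrak{V}_1}\bigl(U(t)_\di(-{\rm i}\widehat H(t))\bigr)=c(t)\,\tau_{\mathfrak{V}_1}(-{\rm i}\widehat H(t)),
$$
and since $c(t)\neq0$ while $\tau_{\mathfrak{V}_1}(-{\rm i}\widehat H(t))\neq0$ by hypothesis, the right-hand side is nonzero for every $U(t)$ in $\mathcal{G}_\mathfrak{W}$, which is the assertion. I expect the only genuinely delicate step to be the passage from the infinitesimal invariance $[\mathfrak{W},\mathfrak{V}_1]\subset\mathfrak{V}_1$ to the group-level invariance of $\mathfrak{V}_1$ under ${\rm Ad}_{U(t)}$; everything else is a short linear-algebra computation on the one-dimensional quotient.
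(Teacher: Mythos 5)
Your proof is correct, and it reaches the conclusion by a genuinely different mechanism than the paper, even though both arguments share the same opening move: project the gauge formula $U(t)_\di(-{\rm i}\widehat H(t))=\widehat A(t)+{\rm Ad}_{U(t)}(-{\rm i}\widehat H(t))$ and kill the $\widehat A(t)$ term using $\tau_{\mathfrak{V}_1}(\mathfrak{W})=0$. Where you differ is in the treatment of $\tau_{\mathfrak{V}_1}\circ{\rm Ad}_{U(t)}$. The paper builds the induced quotient representation $\widehat\rho:\mathfrak{W}\rightarrow{\rm End}(\mathfrak{V}/\mathfrak{V}_1)$, uses $\dim\mathfrak{V}/\mathfrak{V}_1=1$ to encode it in a character $\theta\in\mathfrak{W}^*$, and then expands ${\rm Ad}_{\exp(f(t)w)}$ as a power series to obtain the \emph{explicit} multiplier $e^{f(t)\theta(w)}$, finally iterating over a product $\prod_i\exp(f_i(t)w_i)$. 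You instead argue abstractly: ${\rm ad}_w$ preserves $\mathfrak{V}$ and $\mathfrak{V}_1$, so each ${\rm Ad}_{\exp(w)}=\exp({\rm ad}_w)$ does, hence so does ${\rm Ad}_{U(t)}$ for any pointwise factorization of $U(t)$ into exponentials; its inverse ${\rm Ad}_{U(t)^{-1}}$ preserves them as well, so the induced map on the quotient is invertible, i.e.\ a nonzero scalar. Your route buys two things: it never actually uses the codimension-one hypothesis (an invertible induced map sends nonzero classes to nonzero classes in any codimension, so you have proved a strictly more general statement), and it only needs a factorization of $U(t)$ at each fixed $t$, sidestepping the global-in-$t$ second-kind-coordinates decomposition the paper invokes. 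What the paper's computation buys, and yours does not, is the explicit value of the scalar, $e^{f(t)\theta(w)}$: that formula is exactly the content of Corollary \ref{Corollary} and is what the applications (e.g.\ Lemma \ref{NoCo}, where the factors $e^{-\alpha(t)}$ and $e^{k\alpha(t)/2}$ must be compared) actually require, so your argument could not replace the paper's proof wholesale without losing the corollary. Both proofs rely equally on the formal identity ${\rm Ad}_{\exp(w)}=\exp({\rm ad}_w)$ restricted to the finite-dimensional space $\mathfrak{V}$, so there is no loss of rigor on your side.
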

	\begin{proof} To make the notation clearer, we will simply denote elements of $\mathfrak{W}$ and $\mathfrak{V}$ by lower-case letters, e.g. $w\in \mathfrak{W},v,v_1\in \mathfrak{V}$ and the elements of $\mathcal{G}_\mathfrak{W}$ by curves of the form $g(t)$. Since $[\mathfrak{W},\mathfrak{V}_1]\subset \mathfrak{V}_1$, each Lie algebra morphism $\rho_w=[w,\cdot]$, with $w\in \mathfrak{W}$, induced by the representation of $S(\mathfrak{W},\mathfrak{V})$ leads to  a well-defined morphism $\widehat \rho_w:\tau_{\mathfrak{V}_1}(v)\in \mathfrak{V}/\mathfrak{V}_1\mapsto \tau_{\mathfrak{V}_1}(\rho_w(v))\in \mathfrak{V}/\mathfrak{V}_1$. In fact, if $v_1,v_2\in \mathfrak{V}$ and $\tau_{\mathfrak{V}_1}(v_1)=\tau_{\mathfrak{V}_1}(v_2)$, then $[w,v_1-v_2]\in \mathfrak{V}_1$ for every $w\in \mathfrak{W}$, $\tau_{\mathfrak{V}_1}([w,v_1-v_2])=0$, and
		$$
		\widehat{\rho}_w(\tau_{\mathfrak{V}_1}(v_1)):=\tau_{\mathfrak{V}_1}([w,v_1])=\tau_{\mathfrak{V}_1}([w,v_2])=:\widehat{\rho}_w(\tau_{\mathfrak{V}_1}(v_2)).
		$$
		This allows us to define a Lie algebra morphism $\widehat \rho:w\in \mathfrak{W}\mapsto \widehat{\rho}_w\in  {\rm End}(\mathfrak{V}/\mathfrak{V}_1)$. Indeed,
		$$
		[\widehat \rho_{w_1},\widehat \rho_{w_2}](\tau_{\mathfrak{V}_1}(v)):=\widehat \rho_{w_1}(\widehat \rho_{w_2}(\tau_{\mathfrak{V}_1}(v)))-\widehat \rho_{w_2}(\widehat \rho_{w_1}(\tau_{\mathfrak{V}_1}(v)))=\widehat \rho_{w_1}(\tau_{\mathfrak{V}_1} ([w_2,v]))-\widehat \rho_{w_2}(\tau_{\mathfrak{V}_1} ([w_1,v])),
		$$
		for all $w_1,w_2\in \mathfrak{W}$ and every $v\in \mathfrak{V}$. Using the Jacobi identity and the definition of $\tau_{\mathfrak{V}_1}$,
		$$
		[\widehat \rho_{w_1},\widehat \rho_{w_2}](\tau_{\mathfrak{V}_1}(v)):=\tau_{\mathfrak{V}_1}([w_1,[w_2,v]]-[w_2,[w_1,v]])=\widehat \rho_{[w_1,w_2]}(\tau_{\mathfrak{V}_1}(v)),\qquad \forall w_1,w_2\in \mathfrak{W},\forall v\in \mathfrak{V}.
		$$
		Hence, $[\widehat \rho_{w_1},\widehat \rho_{w_2}]=\widehat \rho_{[w_1,w_2]}$ and $\widehat \rho$ is a Lie algebra morphism. Moreover, $\widehat \rho$ and $\rho:w\in \mathfrak{W}\mapsto [w,\cdot]\in {\rm End}(\mathfrak{V})$ are equivariant relative to $\tau_{\mathfrak{V}_1}$, namely $\widehat\rho_w\circ \tau_{\mathfrak{V}_1}=\tau_{\mathfrak{V}_1}\circ\rho_w$, for all $w\in \mathfrak{W}$. In fact,
		$$
		\widehat{\rho}_w\circ \tau_{\mathfrak{V}_1} (v)=\tau_{\mathfrak{V}_1}([w,v])=\tau_{\mathfrak{V}_1}\circ \rho_w(v),\qquad \forall w\in \mathfrak{W},\forall v\in\mathfrak{V}\Longrightarrow  \widehat \rho_w\circ \tau_{\mathfrak{V}_1}=\tau_{\mathfrak{V}_1}\circ \rho_w,\quad \forall w\in \mathfrak{W}.
		$$
		Using that $\rho$ and  $\widehat \rho$ are equivariant relative to $\tau_{\mathfrak{V}_1}$, we are going to finally prove that if $v(t)$ takes values in $ \mathfrak{V}$ and $\tau_{\mathfrak{V}_1}(v(t))\neq 0$, then $\tau_{\mathfrak{V}_1}(g(t)_\di v(t))\neq 0$ for every  $g(t)$ in $\mathcal{G}_\mathfrak{W}$. 
		Since  $\dim \,\mathfrak{V}/\mathfrak{V}_1=1$, there exists a unique function $\theta\in \mathfrak{W}^*$ such that $\widehat \rho_w(\tau_{\mathfrak{V}_1}(v))=\theta(w)\tau_{\mathfrak{V}_1}(v)$. 
		
		Consider the case of $g(t)=\exp(f(t)w)$ for some $w\in \mathfrak{W}$, a $t$-dependent function $f(t)$, and an element $v(t)$ in $\mathfrak{V}_\mathbb{R}$. It is clear that
		$$
		g(t)_\di v(t)=\frac{dg}{dt}(t) g(t)^{-1}+{\rm Ad}_{g(t)}v(t)=\frac{df}{dt}(t) w+v(t)+f(t)[w,v(t)]+\frac{f^2(t)}{2}[w,[w,v(t)]]+\ldots
		$$
		Applying $\tau_{\mathfrak{V}_1}$ on both sides, using the assumptions on $\mathfrak{W}$ and $\mathfrak{V}$, and recalling the $\widehat \rho$ is equivariant relative to $\tau_{\mathfrak{V}_1}$, we obtain
		$$
		\begin{aligned}
			\tau_{\mathfrak{V}_1}(g(t)_\di v(t))&=\tau_{\mathfrak{V}_1}\left(\frac{dg}{dt}g(t)^{-1}\right)+\tau_{\mathfrak{V}_1} (v(t))+\tau_{\mathfrak{V}_1} \left([f(t)w,v(t)]+\frac {f^2(t)}2[ w,[w,v(t)]]+\ldots\right)\\
			&=(1+f(t)\theta(w)+\frac {f^2(t)}2\theta^2(w)+\ldots )\tau_{\mathfrak{V}_1} (v(t))\\
			&=e^{f(t)\theta(w)}\tau_{\mathfrak{V}_1}(v(t)).
		\end{aligned}
		$$
		Therefore, if $\tau_{\mathfrak{V}_1}(v(t))\neq 0$, then $\tau_{\mathfrak{V}_1}(g(t)_\di v(t))\neq 0$. 
		Since every $g(t)$ in $ \mathcal{G}_\mathfrak{W}$ can be written as a product of elements of the form $g(t)=\prod_{i=1}^s\exp(f_i(t)w_i)$ for some elements $w_1,\ldots,w_s$ of $\mathfrak{W}$ and functions $f_1(t),\ldots,f_s(t)$, it follows from the above result that if $\tau_{\mathfrak{V}_1}(v(t))\neq0$, then $\tau_{\mathfrak{V}_1}(g(t)_\di(v(t)))\neq 0$ for every $g(t)$ in $\mathcal{G}_\mathfrak{W}$.
	\end{proof}
	
	As an immediate consequence of the proof of Proposition \ref{Red2}, we obtain the following corollary. 
	
	\begin{corollary}\label{Corollary} Let us assume the same conditions of Theorem \ref{Red2} and let ${\rm i}\widehat H_1,\ldots,{\rm i}\widehat H_r$ be a basis of $\mathfrak{V}$ such that ${\rm i}H_1,\ldots,{\rm i}H_{r-1}$ is a basis of $\mathfrak{V}_1$. If $\theta$ is the element of $\mathfrak{W}^*$ associated with the induced representation $\widehat\rho:\mathfrak{W}\rightarrow {\rm End}(\mathfrak{V}/\mathfrak{V}_1)$, namely if $\tau_{\mathfrak{V}_1}:\mathfrak{V}\rightarrow \mathfrak{V}/\mathfrak{V}_1\simeq \langle {\rm i}\widehat H_r\rangle$, the $\theta$ is the only element in $\mathfrak{W}^*$ such that $\tau_{\mathfrak{V}_1}([a,{\rm i}\widehat{H}_r])=\theta(a)\tau_{\mathfrak{V}_1}({\rm i}\widehat{H}_r)$ for every $a\in \mathfrak{W}$, then, for any functions $a_1(t),\ldots,a_r(t)$, we have
		$$
		U(t)=\prod_{i=1}^r\exp(a_i(t){\rm i}\widehat H_i)\,\,\Longrightarrow\,\,
		\tau_{\mathfrak{V}_1}(U(t)_\di {\rm i}\widehat{H}(t))=\left[\prod_{i=1}^r\exp(a_i(t)\theta({\rm i}\widehat H_i))\right]{\rm i}\widehat H_{r}.
		$$
	\end{corollary}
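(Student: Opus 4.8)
The plan is to read this statement as the quantitative refinement of the final paragraph in the proof of Theorem \ref{Red2}: there we only recorded that $\tau_{\mathfrak{V}_1}(g(t)_\di v(t))$ stays nonzero, whereas here we must track the exact scalar by which the quotient component is rescaled. The starting observation, which I would verify first, is that the $\di$-operation is a left action with respect to the $\star$-product on $\mathcal{G}_\mathfrak{W}$. Writing $U_\di X=\frac{dU}{dt}U^\dagger+{\rm Ad}_U X$ as in (\ref{action}) and applying the Leibniz rule to $\frac{d(U_1U_2)}{dt}(U_1U_2)^\dagger$, a one-line computation gives $(U_1\star U_2)_\di X=(U_1)_\di\big((U_2)_\di X\big)$. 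Hence, decomposing $U(t)=\exp(a_1(t){\rm i}\widehat H_1)\star\cdots\star\exp(a_r(t){\rm i}\widehat H_r)$, I would peel off the factors one at a time from the right, so that it suffices to understand the effect on the quotient of a single exponential $\exp(f(t)w)$ with $w\in\mathfrak{W}$.

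For a single factor I would reuse verbatim the computation already carried out inside the proof of Theorem \ref{Red2}. Since $\dim\mathfrak{V}/\mathfrak{V}_1=1$, every endomorphism $\widehat\rho_w\in{\rm End}(\mathfrak{V}/\mathfrak{V}_1)$ is multiplication by the scalar $\theta(w)$, with $\theta\in\mathfrak{W}^*$. Applying $\tau_{\mathfrak{V}_1}$ to
$$\exp(f(t)w)_\di v(t)=\frac{df}{dt}(t)\,w+v(t)+f(t)[w,v(t)]+\frac{f^2(t)}{2}[w,[w,v(t)]]+\ldots,$$
the term $\frac{df}{dt}w$ is annihilated by the hypothesis $\tau_{\mathfrak{V}_1}(\mathfrak{W})=0$, while the equivariance $\tau_{\mathfrak{V}_1}\circ\rho_w=\widehat\rho_w\circ\tau_{\mathfrak{V}_1}$ turns the remaining adjoint series into the scalar series $\sum_k \tfrac{1}{k!}f^k(t)\theta^k(w)$, yielding $\tau_{\mathfrak{V}_1}(\exp(f(t)w)_\di v(t))=e^{f(t)\theta(w)}\tau_{\mathfrak{V}_1}(v(t))$. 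What legitimizes the iteration is that each partial product $\exp(a_j(t){\rm i}\widehat H_j)\star\cdots\star\exp(a_r(t){\rm i}\widehat H_r)$ again lies in $\mathcal{G}_\mathfrak{W}$, so by Theorem \ref{MainT} each intermediate $\di$-image remains in $\mathfrak{V}_\mathbb{R}$ and $\tau_{\mathfrak{V}_1}$ can indeed be applied at every stage.

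Iterating the single-factor identity from the innermost factor outward, each step multiplies $\tau_{\mathfrak{V}_1}$ of the current operator by the scalar $e^{a_i(t)\theta({\rm i}\widehat H_i)}$; since these scalars commute they simply accumulate into $\prod_{i=1}^r\exp(a_i(t)\theta({\rm i}\widehat H_i))$, and the base of the recursion is $\tau_{\mathfrak{V}_1}({\rm i}\widehat H(t))$, which under the identification $\mathfrak{V}/\mathfrak{V}_1\simeq\langle{\rm i}\widehat H_r\rangle$ is represented by ${\rm i}\widehat H_r$. This produces the announced formula. The main obstacle I foresee is bookkeeping rather than substance: one must be careful that the factors entering $U(t)$ are genuinely exponentials of elements of $\mathfrak{W}$ (which, because $\tau_{\mathfrak{V}_1}(\mathfrak{W})=0$ forces $\mathfrak{W}\subset\mathfrak{V}_1$, is precisely what makes $\theta({\rm i}\widehat H_i)$ meaningful and the logarithmic-derivative terms vanish under $\tau_{\mathfrak{V}_1}$), and that the isomorphism $\mathfrak{V}/\mathfrak{V}_1\simeq\langle{\rm i}\widehat H_r\rangle$ is used consistently throughout. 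Once the left-action property and the single-factor eigenvalue computation are in hand, the corollary follows by a direct induction on the number of factors.
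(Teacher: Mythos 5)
Your proposal is correct and follows essentially the same route as the paper: the paper presents this corollary as an immediate consequence of the proof of Theorem \ref{Red2}, i.e.\ it reuses the single-exponential identity $\tau_{\mathfrak{V}_1}\bigl(\exp(f(t)w)_\di v(t)\bigr)=e^{f(t)\theta(w)}\tau_{\mathfrak{V}_1}(v(t))$ and iterates it over the factors of $U(t)$, exactly as you do. Your only additions—making explicit the left-action property $(U_1\star U_2)_\di X=(U_1)_\di\bigl((U_2)_\di X\bigr)$ (which in fact holds directly from the paper's definition of $\di$ via composition of evolution operators) and the remark that intermediate images stay in $\mathfrak{V}_\mathbb{R}$ by Theorem \ref{MainT}—are sound and merely spell out steps the paper leaves implicit.
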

	\section{Applications of quantum quasi-Lie schemes}\label{Applications}
	This section concerns the application of the theory of quantum quasi-Lie schemes to the quantum anharmonic oscillator model given by $n$ interacting particles via a $t$-dependent Hermitian operator of the form
	\begin{equation}\label{Ham}
		\widehat{H}_{nH}(t):=\frac{1}{2}\sum_{i=1}^n\left({ \widehat{p}}_i^2+\omega^2(t)\widehat{
			x}_i^2\right)+c(t){U_{nH}}({ \widehat{x}}_1,\ldots,{ \widehat{x}}_n),
	\end{equation}
	where $c(t)$ is a non-vanishing real function, $\omega(t)$ is any real $t$-dependent 
	function describing a sort of $t$-dependent frequency, and ${U_{nH}}({ \widehat{x}}_1,\ldots,{ \widehat{x}}_n)$ is a quantum
	potential determined by an homogeneous polynomial of order $k$ depending on the position operators ${
		\widehat{x}}_1,\ldots,{ \widehat{x}}_n$, i.e.
	${U_{nH}}(\lambda { \widehat{x}}_1,\ldots,\lambda{ \widehat{x}}_n)=\lambda^k {U_{nH}}({ \widehat{x}}_1,\ldots,{ \widehat{x}}_n)$ for every $\lambda\in \mathbb{R}$. As a particular instance, (\ref{Ham}) covers many types of anharmonic quantum oscillators, which have been extensively studied in the literature \cite{LL65,SCKW97,SCWZ99}.
	
	Perelomov found some conditions ensuring the existence of a $t$-dependent change of variables mapping a classical analogue of (\ref{Ham}) onto an autonomous Hamiltonian system, up to a trivial time-reparametrisation, related to the {Hamiltonian}
	\begin{equation}\label{HamAut}
		H_{nH}=\frac{1}{2}\sum_{i=1}^n  p_i^2+{U_{nH}}( x_1,\ldots, x_n),
	\end{equation}
	where $U_{nH}$ is now understood as a real homogeneous polynomial of order $k$ on the position variables $x_1,\ldots,x_n$ (see \cite{Pe78} for details).
	As many Hamiltonians of the form (\ref{HamAut}) are known to be explicitly integrable, the $t$-dependent change of variables found by Perelomov relates the solutions of such Hamiltonians with the solutions of an associated non-autonomous one $H_{nH}$ \cite{Pe78}.

	Perelomov left as an open problem to look for a quantum analogue of his results (cf. \cite{Pe78}). Additionally, the classical anharmonic oscillator related to the $t$-dependent {Hamiltonian} (\ref{Ham}) was briefly {analysed} via quasi-Lie schemes in \cite{CGL08}. Subsequently, these results are extended to the quantum realm by means of a quantum quasi-Lie scheme, resulting in a solution to Perelomov's open problem.
	
	Let us build up a quantum quasi-Lie scheme $S(\mathfrak{W}_{nH},\mathfrak{V}_{nH})$ to deal with the $t$-dependent {Hamiltonian} operator $-{\rm i}\widehat{H}_{nH}(t)$. This demands to determine subspaces $\mathfrak{W}_{nH},\mathfrak{V}_{nH}$ of skew-Hermitian operators satisfying (\ref{con}) and such that $-{\rm i}\widehat{H}_c(t)$ takes values in $\mathfrak{V}_{nH}$. The construction of $S(\mathfrak{W}_{nH},\mathfrak{V}_{nH})$ is accomplished in the following lemma.
	
	\begin{lemma}\label{lem1} The spaces $\mathfrak{V}_{nH}:=\langle {\rm i}\widehat{H}_1,{\rm i}\widehat{H}_2,{\rm i}\widehat{H}_3,{\rm i}\widehat{H}_4\rangle$ and $\mathfrak{W}_{nH}:=\langle {\rm i}\widehat{H}_2,{\rm i}\widehat{H}_3\rangle$, with 
		$$
		\widehat{H}_1:=\sum_{i=1}^n\frac{{ \widehat{p}}_i^2}2,\quad \widehat{H}_2:=\frac{1}{4}\sum_{i=1}^n({ \widehat{x}}_i{ \widehat{p}}_i+{ \widehat{p}}_i{ \widehat{x}}_i),\quad \widehat{H}_3:=\sum_{i=1}^n\frac{{ \widehat{x}}_i^2}2,\quad \widehat{H}_4:={U_{nH}}(\widehat { x}_1,\ldots,\widehat { x}_n),
		$$
		give rise to quantum quasi-Lie scheme $S(\mathfrak{W}_{nH},\mathfrak{V}_{nH})$ such that $-{\rm i}\widehat{H}_{nH}(t)$ takes values in $\mathfrak{V}_{nH}$.
	\end{lemma}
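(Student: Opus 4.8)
The plan is to verify directly the three defining conditions (\ref{con}) of a quantum quasi-Lie scheme and then to exhibit $\widehat{H}_{nH}(t)$ as a linear combination of the listed generators. All the commutators below are computed on a common invariant dense domain (for instance the Schwartz space, or the linear span of products of Hermite functions), on which the polynomial operators $\widehat{H}_1,\ldots,\widehat{H}_4$ and all their brackets are well defined; I would state this at the outset and then argue purely algebraically from the canonical commutation relations $[\widehat{x}_i,\widehat{p}_j]={\rm i}\delta_{ij}$.

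First I would establish the internal structure of $\langle \widehat{H}_1,\widehat{H}_2,\widehat{H}_3\rangle$, which is the familiar $\mathfrak{sl}(2,\mathbb{R})$ realisation attached to an isotropic harmonic oscillator. The key auxiliary identities are $[\widehat{H}_2,\widehat{x}_j]=-\tfrac{{\rm i}}{2}\widehat{x}_j$ and $[\widehat{H}_2,\widehat{p}_j]=\tfrac{{\rm i}}{2}\widehat{p}_j$, which express that $\widehat{H}_2$ is the dilation generator. From these one reads off $[\widehat{H}_2,\widehat{H}_1]={\rm i}\widehat{H}_1$, $[\widehat{H}_2,\widehat{H}_3]=-{\rm i}\widehat{H}_3$ and $[\widehat{H}_1,\widehat{H}_3]=-2{\rm i}\widehat{H}_2$, hence the corresponding brackets of the skew-Hermitian generators, e.g. $[{\rm i}\widehat{H}_2,{\rm i}\widehat{H}_3]={\rm i}\widehat{H}_3$ and $[{\rm i}\widehat{H}_2,{\rm i}\widehat{H}_1]=-{\rm i}\widehat{H}_1$. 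In particular $[\mathfrak{W}_{nH},\mathfrak{W}_{nH}]\subset\mathfrak{W}_{nH}$, and the inclusion $\mathfrak{W}_{nH}\subset\mathfrak{V}_{nH}$ is immediate, so two of the three conditions are already in hand.

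The decisive step is the action of $\mathfrak{W}_{nH}$ on the anharmonic generator $\widehat{H}_4=U_{nH}(\widehat{x}_1,\ldots,\widehat{x}_n)$. Since $\widehat{H}_3$ and $\widehat{H}_4$ are both functions of the mutually commuting position operators, $[\widehat{H}_3,\widehat{H}_4]=0$. For $\widehat{H}_2$ I would exploit that $\mathrm{ad}_{\widehat{H}_2}$ is a derivation scaling each $\widehat{x}_j$ by $-\tfrac{{\rm i}}{2}$; applied to a monomial of degree $m$ in the commuting $\widehat{x}_j$ it produces the factor $-\tfrac{{\rm i}m}{2}$, and since $U_{nH}$ is homogeneous of degree $k$ every monomial contributes the same factor, giving $[\widehat{H}_2,\widehat{H}_4]=-\tfrac{{\rm i}k}{2}\widehat{H}_4$. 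This is the operator form of Euler's relation for homogeneous functions; what makes it clean is precisely that $\widehat{H}_4$ involves only the commuting positions, so there is no ordering ambiguity and the derivation acts term-by-term. Collecting all brackets of ${\rm i}\widehat{H}_2$ and ${\rm i}\widehat{H}_3$ against the four generators then shows $[\mathfrak{W}_{nH},\mathfrak{V}_{nH}]\subset\mathfrak{V}_{nH}$, completing the verification of (\ref{con}).

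Finally I would record that $\widehat{H}_{nH}(t)=\widehat{H}_1+\omega^2(t)\widehat{H}_3+c(t)\widehat{H}_4$, whence $-{\rm i}\widehat{H}_{nH}(t)=-{\rm i}\widehat{H}_1-\omega^2(t)\,{\rm i}\widehat{H}_3-c(t)\,{\rm i}\widehat{H}_4$ is a curve in $\mathfrak{V}_{nH}$ with real $t$-dependent coefficients, as required. The only genuine obstacle is the homogeneity computation $[\widehat{H}_2,\widehat{H}_4]=-\tfrac{{\rm i}k}{2}\widehat{H}_4$, where one must keep track of the degree $k$ and invoke that the $\widehat{x}_i$ commute; the remaining relations are a routine bookkeeping check of the canonical commutation relations on the chosen dense domain.
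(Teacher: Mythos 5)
Your proof is correct and follows essentially the same route as the paper's: a direct verification of the three conditions in (\ref{con}) from the canonical commutation relations, with the decisive step being the homogeneity identity $[{\rm i}\widehat H_2,{\rm i}\widehat H_4]=\frac{{\rm i}}{2}k\widehat H_4$, followed by reading off $\widehat H_{nH}(t)=\widehat H_1+\omega^2(t)\widehat H_3+c(t)\widehat H_4$ as a curve in $\mathfrak{V}_{nH}$. The only minor difference is that the paper obtains that identity by reducing $\widehat H_2$ to $\sum_j\widehat x_j\widehat p_j$ and invoking Euler's homogeneous function theorem, which also covers non-polynomial homogeneous potentials (as needed when the lemma is reused in Section \ref{minos2} for degree $-2$ potentials), whereas your term-by-term monomial argument is specific to polynomials — though your closing remark that it is the operator form of Euler's relation recovers the general case.
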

	\begin{proof} Let us prove first that $S(\mathfrak{W}_{nH},\mathfrak{V}_{nH})$ is a quantum quasi-Lie scheme. It is immediate that $\mathfrak{V}_{nH}$ and $\mathfrak{W}_{nH}$ are finite-dimensional linear spaces of skew-Hermitian operators. %It was proved in Section \ref{Sec:QQLS}, that the operators ${\rm i}\widehat{H}_1,{\rm i}\widehat{H}_2$ admits a subspace $D_V$ of analytic functions spanned by $p({ x})e^{-|{ x}|^2}$, where $p({ x})$ is any polynomial in the coordinates $x_1,\ldots, x_n$. It is known that $D_V$ is dense in $\mathcal{H}$. 
		%	Since ${U_{nH}}(\widehat { x}_1,\ldots,\widehat { x}_n)$ is a polynomial of order $k$, the {linear} space $D_V$\marginpar{definido muy lejos} is invariant relative to ${\rm i}\widehat{H}_3$ and ${\rm i}\widehat{H}_4$. 
		
		Since $[{\rm i}\widehat{H}_2,{\rm i}\widehat{H}_3]={\rm i}\widehat{H}_3$,
		the space $\mathfrak{W}_{nH}$ is a real Lie algebra of skew-Hermitian operators as demanded by the definition of a quantum quasi-Lie scheme. Let us verify the only left property of quantum quasi-Lie schemes, namely $[\mathfrak{W}_{nH},\mathfrak{V}_{nH}]\subset \mathfrak{V}_{nH}$. It is immediate that 
		$$
		[{\rm i}\widehat{H}_2,{\rm i}\widehat{H}_1]=-{\rm i}\widehat{H}_1,\qquad [{\rm i}\widehat{H}_3,{\rm i}\widehat{H}_1]=-2{\rm i}\widehat{H}_2,\qquad [{\rm i}\widehat{H}_3,{\rm i}\widehat{H}_4]=0.
		$$
		The only non-immediate part to corroborate the condition $[\mathfrak{W}_{nH},\mathfrak{V}_{nH}]\subset \mathfrak{V}_{nH}$ is to show that $[{\rm i}\widehat{H}_2,{\rm i}\widehat{H}_4]\subset \mathfrak{V}_{nH}$. As $[{ \widehat{x}}_j,{ \widehat{p}}_j]={\rm i}{ I}$ for $j=1,\ldots,n$, then ${ \widehat{p}}_j{ \widehat{x}}_j={ \widehat{x}}_j{ \widehat{p}}_j-{\rm i}\widehat{ I}$ and
		\begin{equation}\label{homog}
			\sum_{j=1}^n({ \widehat{x}}_j{ \widehat{p}}_j+{ \widehat{p}}_j{ \widehat{x}}_j)=2\sum_{j=1}^n{ \widehat{x}}_j{ \widehat{p}}_j-{\rm i}n\widehat{ I},
		\end{equation}
		where $\widehat{I}$ is the identity operator on $\mathcal{H}$. Consequently,
		$$
		[{\rm i}\widehat{H}_2,{\rm i}\widehat{H}_4]=-\frac 12\left[\sum_{j=1}^n{ \widehat{x}}_j{ \widehat{p}}_j,{U_{nH}}({ \widehat{x}}_1,\ldots,{ \widehat{x}}_n)\right]=\frac {\rm i}2\sum_{j=1}^n\widehat{ x}^j\frac{\partial {U_{nH}}}{\partial x^j}({ \widehat{x}}_1,\ldots,{ \widehat{x}}_n).
		$$
		In view of the {\it Euler's homogeneous function theorem} \cite{Bu78} and recalling that ${U_{nH}}$ is a homogeneous function of degree $k$, it follows that
		$$
		\sum_{j=1}^nx^j\frac{\partial {U_{nH}}}{\partial x^j}=k{U_{nH}} \,\,\Longrightarrow [{\rm i}\widehat{H}_2,{\rm i}\widehat{H}_4]=\frac {\rm i}2k{U_{nH}}({ \widehat{x}}_1,\ldots,{ \widehat{x}}_n).
		$$
		This yields $[{\rm i}\widehat{H}_2,{\rm i}\widehat{H}_4]=\frac{{\rm i}}2k\widehat{H}_4$ and
		% The other non-null commutation relations among the elements of $W$ and $V$ form a $\mathfrak{sl}(2,\mathbb{R})$ Lie algebra
		% \begin{eqnarray}\label{ConRel}
			%[{\rm i}\widehat{H}_1,{\rm i}\widehat{H}_2]={\rm i}\widehat{H}_1, \quad [{\rm i}\widehat{H}_2,{\rm i}\widehat{H}_3]={\rm i}\widehat{H}_3, \quad [{\rm i}\widehat{H}_1,{\rm i}\widehat{H}_3]=2i\, \widehat{H}_2.
			%\end{eqnarray}
			$[\mathfrak{W}_{nH},\mathfrak{V}_{nH}]\subset \mathfrak{V}_{nH}$. Thus, $\mathfrak{W}_{nH},\mathfrak{V}_{nH}$ give rise to a quantum quasi-Lie scheme $S(\mathfrak{W}_{nH},\mathfrak{V}_{nH})$.
			
			Finally, the form of $\mathfrak{V}_{nH}$ and $-{\rm i}\widehat{H}_{nH}(t)$, which is given by (\ref{Ham}), {ensures} that $-{\rm i}\widehat{H}_{nH}(t)$ takes values in $\mathfrak{V}_{nH}$, which finishes the proof.
		\end{proof}
		
		Once it is been stated that $S(\mathfrak{W}_{nH},\mathfrak{V}_{nH})$ is a quantum quasi-Lie scheme and it can be used to describe the $t$-dependent Hamiltonian operators in (\ref{Ham}), it is time to 
		apply the group of $S(\mathfrak{W}_{nH},\mathfrak{V}_{nH})$ to analyse $-{\rm i}\widehat H_{nH}(t)$. In particular, a relevant case occurs when an element $U(t)$ of $\mathcal{G}_{\mathfrak{W}_{nH}}$ allows us to map $-{\rm i}\widehat H_{nH}(t)$ into a $t$-dependent skew-Hermitian operator $-{\rm i}\widehat{H}_{U}'(t):=-U(t)_\di{\rm i}\widehat{H}_{nH}(t)$ taking values in a one-dimensional subspace of $\mathfrak{V}_{nH}$. In such a case, a time-reparametrisation allows us to solve the transformed $t$-dependent Schr\"odinger equation related to $-{\rm i}\widehat{H}_{U}'(t)$ and, applying  $U^{-1}(t)$ to the general solution of the latter, we obtain the general solution of the initial $t$-dependent Schr\"odinger equation.
		
		\begin{lemma}\label{NoCo} The $t$-dependent operator $-U(t)_\di {\rm i}\widehat H_{nH}(t)$ takes values in a one-dimensional subspace of $\mathfrak{V}_{nH}$ if and only if $U(t)$, which is assumed without loss of generality to take the form 
			\begin{equation}\label{decom}
				U(t)=\exp\left({\rm i}\alpha(t)\widehat{H}_2\right)\exp\left({\rm i}\beta(t)\widehat{H}_3\right),
			\end{equation}
			is such that {$c(t)e^{\frac{\alpha(t)(k+2)}2}$} is a non-zero constant.
		\end{lemma}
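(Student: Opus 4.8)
The plan is to compute $U(t)_\di(-{\rm i}\widehat H_{nH}(t))$ explicitly via Theorem \ref{MainT} and the commutation relations of Lemma \ref{lem1}, and then to read off when its image is one-dimensional. Writing $-{\rm i}\widehat H_{nH}(t)=-{\rm i}\widehat H_1-\omega^2(t){\rm i}\widehat H_3-c(t){\rm i}\widehat H_4$ from (\ref{Ham}) and invoking (\ref{action}), one has $U(t)_\di(-{\rm i}\widehat H_{nH}(t))=\widehat A(t)+{\rm Ad}_{U(t)}(-{\rm i}\widehat H_{nH}(t))$, where $\widehat A(t)=\frac{dU}{dt}(t)U^\dagger(t)$ is the $\mathfrak W_{nH}$-valued term determined by $U(t)$.

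First I would compute $\widehat A(t)$ from the factorisation (\ref{decom}). Differentiating $U(t)=\exp({\rm i}\alpha(t)\widehat H_2)\exp({\rm i}\beta(t)\widehat H_3)$ and using $[{\rm i}\widehat H_2,{\rm i}\widehat H_3]={\rm i}\widehat H_3$, which gives ${\rm Ad}_{\exp({\rm i}\alpha\widehat H_2)}({\rm i}\widehat H_3)=e^{\alpha}{\rm i}\widehat H_3$, yields $\widehat A(t)=\dot\alpha\,{\rm i}\widehat H_2+\dot\beta\,e^{\alpha}{\rm i}\widehat H_3\in\mathfrak W_{nH}$, as expected.

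The core step, and the main obstacle, is the nested adjoint computation of ${\rm Ad}_{U(t)}(-{\rm i}\widehat H_{nH}(t))$. Applying ${\rm Ad}_{\exp({\rm i}\beta\widehat H_3)}$ first, the relations $[{\rm i}\widehat H_3,{\rm i}\widehat H_1]=-2{\rm i}\widehat H_2$ and $[{\rm i}\widehat H_3,{\rm i}\widehat H_2]=-{\rm i}\widehat H_3$ make the adjoint series on ${\rm i}\widehat H_1$ terminate at second order, so that ${\rm i}\widehat H_1\mapsto {\rm i}\widehat H_1-2\beta\,{\rm i}\widehat H_2+\beta^2\,{\rm i}\widehat H_3$, while $[{\rm i}\widehat H_3,{\rm i}\widehat H_4]=0$ leaves ${\rm i}\widehat H_3$ and ${\rm i}\widehat H_4$ fixed. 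Applying next ${\rm Ad}_{\exp({\rm i}\alpha\widehat H_2)}$, each generator is an eigenvector of ${\rm ad}_{{\rm i}\widehat H_2}$, whence ${\rm i}\widehat H_1\mapsto e^{-\alpha}{\rm i}\widehat H_1$, ${\rm i}\widehat H_3\mapsto e^{\alpha}{\rm i}\widehat H_3$ and, crucially, ${\rm i}\widehat H_4\mapsto e^{k\alpha/2}{\rm i}\widehat H_4$ from the homogeneity relation $[{\rm i}\widehat H_2,{\rm i}\widehat H_4]=\frac k2{\rm i}\widehat H_4$ of Lemma \ref{lem1}. Collecting all contributions gives
$$
U(t)_\di(-{\rm i}\widehat H_{nH}(t))=-e^{-\alpha}{\rm i}\widehat H_1+(\dot\alpha+2\beta){\rm i}\widehat H_2+e^{\alpha}(\dot\beta-\beta^2-\omega^2){\rm i}\widehat H_3-c\,e^{k\alpha/2}{\rm i}\widehat H_4.
$$

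Finally I would extract the equivalence. The coefficients of ${\rm i}\widehat H_1$ and ${\rm i}\widehat H_4$, namely $-e^{-\alpha}$ and $-c\,e^{k\alpha/2}$, never vanish because $c$ is non-vanishing; hence, if the curve lies in a one-dimensional subspace of $\mathfrak V_{nH}$, these two coefficients must be proportional with a constant ratio, and that ratio is precisely $c\,e^{\alpha(k+2)/2}$, forcing it to be a non-zero constant. This settles necessity and is immediate once the explicit form above is in hand. For sufficiency, when $c\,e^{\alpha(k+2)/2}=K$ is a non-zero constant the ${\rm i}\widehat H_4$-coefficient equals $K$ times the ${\rm i}\widehat H_1$-coefficient, so the $\langle{\rm i}\widehat H_1,{\rm i}\widehat H_4\rangle$-part already points along the fixed vector ${\rm i}\widehat H_1+K{\rm i}\widehat H_4$; it then suffices to annihilate the $\mathfrak W_{nH}$-valued part by choosing $\beta$ as a solution of the Riccati equation $\dot\beta-\beta^2-\omega^2=0$ (locally solvable) together with $\dot\alpha=-2\beta$, so that the coefficients of ${\rm i}\widehat H_2$ and ${\rm i}\widehat H_3$ vanish and the image reduces to the line $\langle{\rm i}\widehat H_1+K{\rm i}\widehat H_4\rangle$. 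The delicate point is precisely the interplay between eliminating the $\mathfrak W_{nH}$-directions (governed by the locally solvable Riccati equation) and the ratio constraint on $\alpha$, whereas the necessity half follows directly from the non-vanishing of the two key coefficients.
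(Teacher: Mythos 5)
Your necessity argument is correct, and it takes a genuinely different route from the paper's own proof. Inside the proof of Lemma \ref{NoCo} the paper never computes the full transformed operator: it uses the two codimension-one subrepresentations $\mathfrak{V}_1=\langle {\rm i}\widehat H_1,{\rm i}\widehat H_2,{\rm i}\widehat H_3\rangle$ and $\mathfrak{V}_2=\langle {\rm i}\widehat H_2,{\rm i}\widehat H_3,{\rm i}\widehat H_4\rangle$ together with Theorem \ref{Red2} and Corollary \ref{Corollary} to extract only the projections onto the quotients, i.e.\ the ${\rm i}\widehat H_1$- and ${\rm i}\widehat H_4$-coefficients $e^{-\alpha(t)}$ and $c(t)e^{k\alpha(t)/2}$, and then imposes that these be proportional. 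Your explicit computation of $U(t)_\di(-{\rm i}\widehat H_{nH}(t))$ instead reproduces exactly the paper's formula (\ref{transformed}), which the paper only derives \emph{after} the lemma; your route front-loads that computation and dispenses with the quotient-representation machinery, which is perfectly legitimate, and the resulting necessity argument (the ${\rm i}\widehat H_1$- and ${\rm i}\widehat H_4$-coefficients never vanish, hence their ratio $c(t)e^{\alpha(t)(k+2)/2}$ must be a nonzero constant) is sound.

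The sufficiency half, however, contains a genuine gap, and in fact the ``if'' direction cannot be proved at all because it is false as literally stated. The lemma concerns a \emph{fixed} $U(t)$, i.e.\ fixed $\alpha(t)$ and $\beta(t)$; you are not free to ``choose $\beta$ as a solution of the Riccati equation''. With $U(t)$ fixed, constancy of $c(t)e^{\alpha(t)(k+2)/2}$ says nothing about the ${\rm i}\widehat H_2$- and ${\rm i}\widehat H_3$-coefficients $\dot\alpha+2\beta$ and $e^{\alpha}(\dot\beta-\beta^2-\omega^2)$, which must also be constant multiples of $e^{-\alpha}$ for the image to lie in a line. Concretely, take $\omega\equiv 0$, $c\equiv 1$, $\alpha\equiv 0$, $\beta(t)=t$: the constancy condition holds, yet $U(t)_\di(-{\rm i}\widehat H_{nH}(t))=-{\rm i}\widehat H_1+2t\,{\rm i}\widehat H_2+(1-t^2)\,{\rm i}\widehat H_3-{\rm i}\widehat H_4$ does not take values in any one-dimensional subspace. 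Even under an existential reading (``some admissible $U(t)$ works''), your repair fails for a structural reason: the constancy condition determines $\alpha$ from $c$ up to an additive constant, the requirement $\dot\alpha=-2\beta$ then determines $\beta$, and demanding that this $\beta$ also solve $\dot\beta=\beta^2+\omega^2$ is precisely the nontrivial compatibility condition (\ref{FinCon}) on $c$ and $\omega$, which does not follow from constancy alone; the local solvability of the Riccati equation produces a $\beta$ that will in general violate the constancy constraint on the associated $\alpha$. This is why the paper proves only the necessity direction and, immediately after the lemma, calls it ``an easily derivable necessary condition'': the sufficient conditions require the extra hypotheses (\ref{Con2})--(\ref{FinCon}) appearing in Proposition \ref{hom}.
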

		\begin{proof}
			In view of the Lie algebra representation of $S(\mathfrak{W}_{nH},\mathfrak{V}_{nH})$, namely $\rho:w\in \mathfrak{W}_{nH}\mapsto [w,\cdot]\in {\rm End}( \mathfrak{V}_{nH})$,  the quantum quasi-Lie scheme $S(\mathfrak{W}_{nH},\mathfrak{V}_{nH})$ admits two subrepresentations given by $\mathfrak{V}_1:=\langle {\rm i}\widehat H_1,{\rm i}\widehat H_2,{\rm i}\widehat H_3\rangle$ and $\mathfrak{V}_2:=\langle{\rm i}\widehat H_2,{\rm i}\widehat H_3,{\rm i}\widehat H_4 \rangle$. 
			
			Since $c(t)\neq 0$, the $t$-dependent operator $-{\rm i}\widehat H_{nH}(t)$ has no zero projection neither onto $\mathfrak{V}/\mathfrak{V}_1$ nor onto $\mathfrak{V}/\mathfrak{V}_2$. Then, Theorem \ref{Red2} yields that the $t$-dependent skew-Hermitian  operator of the form $-U(t)_\di{\rm i}\widehat{H}_{nH}(t)$ will never take values neither in $\mathfrak{V}_1$ nor {in} $\mathfrak{V}_2$ for any $U(t)$ in $ \mathcal{G}_{\mathfrak{W}}$. In other words, if we write $-U(t)_\di {\rm i}\widehat H_{nH}(t)$ in the  basis ${\rm i}H_1,\ldots,{\rm i}H_4$ of $\mathfrak{V}_{nH}$, then the $t$-dependent coefficients relative to ${\rm i}H_1$ or ${\rm i}H_4$ will never vanish. Moreover, as we want $-U(t)_\di{\rm i}\widehat H_{nH}(t)$ to take values in a one-dimensional subspace of $\mathfrak{V}_{nH}$, the $t$-dependent coefficients relative to ${\rm i}H_4$ and ${\rm i}H_1$ must be proportional. Let us determine their values exactly without determining the exact form of $-U(t)_\di{\rm i}\widehat H_{nH}(t)$. 
			
			As in the proof of Theorem \ref{Red2}, the representation of $S(\mathfrak{W}_{nH},\mathfrak{V}_{nH})$ induces a representation $\widehat \rho_{\mathfrak{V}_2}:\mathfrak{W}_{nH}\rightarrow {\rm End}(\mathfrak{V}/\mathfrak{V}_2)$ such that $(\widehat\rho_{\mathfrak{V}_2})_{{\rm i}\widehat H_3}({{\rm i}\widehat H_1})=0$ i $(\widehat\rho_{\mathfrak{V}_2})_{{\rm i}\widehat H_2}({{\rm i}\widehat H_1})=-{\rm i}\widehat H_1$. Hence, $\theta\in \mathfrak{W}_{nH}^*$ given by $\theta({\rm i}\widehat H_3)=0$ and $\theta({\rm i}\widehat H_2)=-1$ is the unique element of $\mathfrak{W}_{nH}^*$ such that $\tau_{\mathfrak{V}_2}[\widehat A, \widehat B]=\theta(\widehat A)\tau_{\mathfrak{V}_2}(\widehat B)$ for every $\widehat A\in \mathfrak{W}_{nH}$ and $\widehat B\in \mathfrak{V}_{nH}$. If $U(t)$ has the form (\ref{decom}), then  Corollary \ref{Corollary} gives that 
			$
			\tau_{\mathfrak{V}_2}(U(t)_\di {\rm i}\widehat H(t))=e^{-\alpha(t)}{\rm i}\widehat H_1.
			$
			
			Meanwhile, the induced representation $\widehat \rho_{\mathfrak{V}_1}:\mathfrak{W}_{nH}\rightarrow {\rm End}(\mathfrak{V}/\mathfrak{V}_1)$ by the second subrepresentation $\mathfrak{V}_1$ is such that $(\widehat\rho_{\mathfrak{V}_1})_{{\rm i}\widehat H_3}({{\rm i}\widehat H_4})=0$ i $(\widehat\rho_{\mathfrak{V}_1})_{{\rm i}\widehat H_2}({{\rm i}\widehat H_4})={\rm i}k/2\widehat H_4$. Hence, $\theta$ is determined by $\theta({\rm i}H_3)=0$ and $\theta({\rm i}\widehat H_2)=k/2$. Thus, if $U(t)$ has the form (\ref{decom}), then   Corollary \ref{Corollary} gives that 
			$
			\tau_{\mathfrak{V}_1}(U(t)_\di {\rm i}H(t))=e^{k\alpha(t)/2}{\rm i}\widehat H_4.
			$
			
			Since the $t$-dependent coefficients of $\tau_{\mathfrak{V}_1}(U(t)_\di {\rm i}\widehat H(t))$ and $\tau_{\mathfrak{V}_2}(U(t)_\di {\rm i}\widehat H(t))$ must be proportional, it follows that 
			$
			c(t)e^{\frac{\alpha(t)(k+2)}2}
			$ must be a constant.
		\end{proof}
		
		Lemma \ref{NoCo} provides an easily derivable necessary condition to map $-{\rm i}\widehat H_{nH}(t)$ into a new $t$-dependent operator $-{\rm i}\widehat H'_{nH}(t)$ taking values in a one-dimensional subspace of $\mathfrak{V}_{nH}$ via an $U(t)$ belonging to $\mathcal{G}_{\mathfrak{W}_{nH}}$. Nevertheless, the determination of sufficient conditions to obtain such a $-{\rm i}\widehat H_{nH}'(t)$ demands to determine the action of a generic $U(t)$   on $-{\rm i}\widehat H_{nH}(t)$. If $U(t)$ is in $\mathcal{G}_{\mathfrak{W}_{nH}}$, then
		$$
		-{\rm i}\widehat{H}'_{nH}(t)= U(t)_\di (-{\rm i}\widehat{H}_{nH}(t))=\frac{dU}{dt}(t)U(t)^{-1}+{\rm Ad}(U(t))(-{\rm i}\widehat{H}_{nH}(t)),
		$$
		and Theorem \ref{MainT} ensures that $U(t)_{\di}(-{\rm i}\widehat{H}_{nH}(t))$ takes values in $\mathfrak{V}_{nH}$. Every element $U(t)$ in $ \mathcal{G}_{\mathfrak{W}_{nH}}$ can be written in the form
		\begin{equation}\label{UniTrans}
			U(t):=\exp\left({\rm i}\alpha(t) \widehat{H}_2\right)\exp\left({\rm i}\beta(t) \widehat{H}_3\right).
		\end{equation}
		To obtain $-{\rm i}\widehat{H}'_{nH}=U(t)_{\di}(-{\rm i}\widehat{H}_{nH}(t))$, it is necessary to use that
		{\small
			$$
			\begin{aligned}
				{\rm Ad}\left[\exp\left({\rm i}\lambda \widehat{H}_2\right)\right]\widehat{H}_1&=e^{-\lambda}\widehat{H}_1,\qquad &{\rm Ad}\left[\exp\left({\rm i}\lambda \widehat{H}_3\right)\right]\widehat{H}_1&=\widehat{H}_1-2\lambda\widehat{H}_2+\lambda^2\widehat{H}_3,\\
				{\rm Ad}\left[\exp\left({\rm i}\lambda \widehat{H}_2\right)\right]\widehat{H}_3&=e^{\lambda}\widehat{H}_3,\qquad &{\rm Ad}\left[\exp\left({\rm i}\lambda \widehat{H}_3\right)\right]\widehat{H}_2&=\widehat{H}_2-\lambda \widehat{H}_3,\\
				{\rm Ad}\left[\exp\left({\rm i}\lambda \widehat{H}_2\right)\right]\widehat{H}_4&={ e^{\frac{k\lambda}2}\widehat{H}_4},\qquad &{\rm Ad}\left[\exp\left({\rm i}\lambda \widehat{H}_3\right)\right]\widehat{H}_4&=\widehat{H}_4.
			\end{aligned}
			$$}
		%where we have used the traceless basis of matrices $\{a_1,a_2,a_3\}$
		%$$
		%a_1={\rm i}\widehat{H}_1=
		%\begin{pmatrix} 
		%0 & 1 \\
		%0 & 0 
		%\end{pmatrix},\quad a_2={\rm i}\widehat{H}_2=\frac{1}{2}\begin{pmatrix} 
		%-1 & 0 \\
		%0 & 1 
		%\end{pmatrix},\quad a_3={\rm i}\widehat{H}_3=\begin{pmatrix} 
		%0 & 0 \\
		%-1 & 0 
		%\end{pmatrix}
		%$$
		Then,
		\begin{equation}\label{transformed}
			-\widehat{H}'_{nH}(t)=-e^{-\alpha(t)}\widehat{H}_1+\left(\frac{d\alpha}{dt}(t)+2\beta(t)\right)\widehat{H}_2+e^{\alpha(t)}\left(\frac{d
				\beta}{dt}(t)-\beta^2(t)-\omega^2(t)\right)\widehat{H}_3-e^{\frac{k \alpha(t)}2} c(t)\widehat{H}_4,
		\end{equation}
		where we recall that $\omega(t)$ is the $t$-dependent frequency of our initial $t$-dependent Hermitian operator (\ref{Ham}). 
		
		Note also that Proposition \ref{Red1} ensures that the group of $S(\mathfrak{W}_{nH},\mathfrak{V}_{nH})$ allows us to map $-{\rm i}\widehat H_{nH}(t)$ into a new  $-{\rm i}\widehat{H}'_{nH}(t)$ taking values in $\langle{\rm i}\widehat H_1,{\rm i}\widehat H_4 \rangle$ for some $U(t)$ in $ \mathcal{G}_{\mathfrak{W}_{nH}}$. For instance, suppose that $\beta(t) $ and $\alpha(t)$ are such that
		\begin{equation}\label{Con2}
			\frac{d\beta}{dt}(t)=\beta^2(t)+\omega^2(t),\qquad \frac{d\alpha}{dt}(t)+2\beta(t)=0,
		\end{equation}
		and we recall that $\alpha(0)=\beta(0)=0$.
		Under such conditions, 
		$$
		\widehat{H}'_{nH}(t)=e^{-\alpha(t)}\widehat{H}_1+e^{\frac{\alpha(t) k}2} c(t)\widehat{H}_4.
		$$
		To ensure that $-{\rm i}\widehat{H}'_{nH}(t)$ takes values in a one-dimensional subspace of $\mathfrak{V}_{nH}$, we have to recall the condition in (\ref{NoCo}), i.e. 
		\begin{equation}\label{IntCon}
			c(t)e^{\frac{k\alpha(t)}{2}}=e^{-\alpha(t)}l\Longrightarrow
			c(t)e^{\frac{\alpha(t)(k+2)}{2}}=l,
		\end{equation}
		for a non-cero constant $l\in \mathbb{R}$. Note that we can assume without loss of generality that $l=1$ by redefining $\widehat H_4$. 
		
		Using (\ref{IntCon}) in (\ref{Con2}), we obtain that if $k\neq 2$ and  there exist $\alpha(t)$ and $\beta(t)$ satisfying the previous conditions, then 
		\begin{equation}\label{FinCon}
			(k+2)\frac{d^2c}{dt^2}(t)=(k+3)\frac{(dc/dt)^2(t)}{c(t)}+\omega^2(t)(k+2)^2c(t).
		\end{equation}

		Conversely, if (\ref{FinCon}) holds, then there exist $\alpha(t)$ and $\beta(t)$ satisfying (\ref{Con2}) and (\ref{IntCon}).
		
		It is worth noting that the definition of a new variable $v_c:=dv/dt$ allows us to transform (\ref{FinCon}) into a Lie system related to a VG Lie algebra 
		isomorphic to $\mathfrak{sl}(2,\mathbb{R})$. In fact, if $k=0$, this is indeed a type of Kummer--Schwarz equation of second-order studied in \cite{CGL12}.
		
		Assume that (\ref{FinCon}) is satisfied. Then, the $t$-dependent
		Sch\"{o}dinger equation related to $\widehat{H}'_{nH}(t)$ reads
		$$
		{\rm i}\frac{\partial \psi}{\partial t}=\widehat{H}'_{nH}(t)\psi=e^{-\alpha(t)}(\widehat{H}_1+\widehat{H}_4)\psi.
		$$
		and by means of the time-reparametrization
		$$
		\tau(t):=\int^t e^{-\alpha(t')}dt'=\int^t[c(t') l^{-1}]^{\frac{2}{k+2}}dt',
		$$
		the previous $t$-dependent  Schr\"{o}dinger equation can be mapped into
		$$
		\frac{\partial \psi'}{\partial \tau}=-{\rm i}(\widehat{H}_1+\widehat{H}_4)\psi',
		$$
		whose solution is
		$$
		\psi'(t)=\exp\left(-{\rm i}\tau(t)(\widehat{H}_1+\widehat{H}_4)\right)\psi'(0),
		$$
		and $\psi'$ is the transformed solution and $\psi'(0)$ is an arbitrary element of $\mathcal{L}^2(\mathbb{R}^n)$. 
		Hence, the solution of the initial Schr\"odinger equation determined by the $t$-dependent {Hamiltonian} operator $\widehat{H}_{nH}(t)$ can be obtained by inverting the unitary transformation (\ref{UniTrans}), namely
		\begin{equation}\label{solution}
			\psi(t)=\exp\left(-{\rm i}\beta(t)\widehat{H}_3\right)\exp\left(-{\rm i}\alpha(t)\widehat{H}_2\right)\exp\left(-{\rm i}\tau(t)(\widehat{H}_1+\widehat{H}_4)\right)\psi'(0).
		\end{equation}
		Therefore, we have mapped a non-autonomous Schr\"{o}dinger equation into a new one determined by a $t$-independent Hermitian {Hamiltonian} operator, similarly to what it was done by Perelomov in \cite{Pe78}, but in a quantum mechanical way.  Our result is  summarised in the proposition below.
		
		\begin{proposition}\label{hom} Every $t$-dependent Schr\"odinger equation related to a $t$-dependent Hermitian Hamiltonian operator $\widehat H_{nH}(t)$ of the form (\ref{Ham}), with a homogeneous potential of order $k\neq -2$ and a non-vanishing  function $c(t)$ that is a particular solution of  (\ref{FinCon}),
			has a general solution (\ref{solution}), where
			$$
			\alpha(t)=-\frac{2}{k+2}\log c(t),\qquad \beta(t)=\frac{dc/dt}{c(t)(k+2)},\quad \tau(t)=\int^tc(t')^{\frac{2}{k+2}}dt'.
			$$
		\end{proposition}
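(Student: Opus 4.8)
The plan is to assemble the machinery developed above into a single explicit reduction. First I would invoke Lemma~\ref{lem1}, which guarantees that $S(\mathfrak{W}_{nH},\mathfrak{V}_{nH})$ is a quantum quasi-Lie scheme compatible with $-{\rm i}\widehat H_{nH}(t)$, so that the $t$-dependent operator in (\ref{Ham}) is a curve in $\mathfrak{V}_{nH}$. I would then act on it with a generic element $U(t)$ of the group $\mathcal{G}_{\mathfrak{W}_{nH}}$ written, as in (\ref{UniTrans}), as $U(t)=\exp({\rm i}\alpha(t)\widehat H_2)\exp({\rm i}\beta(t)\widehat H_3)$. By Theorem~\ref{MainT} the transformed operator $U(t)_\di(-{\rm i}\widehat H_{nH}(t))$ again takes values in $\mathfrak{V}_{nH}$, and expanding it through the listed ${\rm Ad}$-actions yields the explicit form (\ref{transformed}).

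The heart of the proof is choosing $\alpha(t)$ and $\beta(t)$ so that the transformed operator lands in a one-dimensional subspace of $\mathfrak{V}_{nH}$. I would impose the two conditions (\ref{Con2}), which annihilate the coefficients of $\widehat H_2$ and $\widehat H_3$ in (\ref{transformed}), leaving only the $\widehat H_1$ and $\widehat H_4$ terms; the vanishing of the $\widehat H_3$-coefficient is precisely the Riccati equation $d\beta/dt=\beta^2+\omega^2$. To collapse the remaining two terms onto the single direction $\langle {\rm i}(\widehat H_1+\widehat H_4)\rangle$ I would invoke Lemma~\ref{NoCo} (equivalently Corollary~\ref{Corollary} applied to the two codimension-one subrepresentations $\mathfrak{V}_1,\mathfrak{V}_2$), which forces the proportionality condition (\ref{IntCon}), i.e.\ $c(t)e^{\alpha(t)(k+2)/2}=l$ for a nonzero constant $l$, normalised to $l=1$. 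The decisive computation is then to show that the Riccati condition and the proportionality condition are mutually consistent exactly when $c(t)$ solves (\ref{FinCon}): substituting $\beta=(dc/dt)/[c(k+2)]$ and $\alpha=-\tfrac{2}{k+2}\log c$, which are forced by (\ref{IntCon}) and the second relation in (\ref{Con2}), into $d\beta/dt=\beta^2+\omega^2$ and clearing denominators reproduces (\ref{FinCon}) verbatim. This simultaneously delivers the closed forms $\alpha(t)=-\tfrac{2}{k+2}\log c(t)$ and $\beta(t)=(dc/dt)/[c(t)(k+2)]$ asserted in the statement, and the hypothesis $k\neq-2$ is exactly what keeps these expressions well defined.

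With $\alpha,\beta$ so determined, the transformed Hamiltonian reduces to $\widehat H'_{nH}(t)=e^{-\alpha(t)}(\widehat H_1+\widehat H_4)$, an autonomous operator up to the scalar factor $e^{-\alpha(t)}=c(t)^{2/(k+2)}$. I would remove this factor by the time reparametrisation $\tau(t)=\int^t e^{-\alpha(t')}\,dt'=\int^t c(t')^{2/(k+2)}\,dt'$, turning the Schr\"odinger equation for $\widehat H'_{nH}$ into the constant-coefficient equation $\partial_\tau\psi'=-{\rm i}(\widehat H_1+\widehat H_4)\psi'$, whose general solution is $\exp(-{\rm i}\tau(t)(\widehat H_1+\widehat H_4))\psi'(0)$. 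Undoing the gauge transformation $U(t)$ of (\ref{UniTrans}) then produces exactly the claimed general solution (\ref{solution}); since $\psi'(0)$ ranges over all of $\mathcal{L}^2(\mathbb{R}^n)$, this curve is indeed the general solution.

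The main obstacle I anticipate is the compatibility step: verifying that imposing both (\ref{Con2}) and (\ref{IntCon}) does not over-determine the pair $(\alpha,\beta)$ but instead exactly recovers (\ref{FinCon}), and conversely that a particular solution of (\ref{FinCon}) furnishes admissible $\alpha,\beta$. The remaining care is bookkeeping: checking that $U(t)$ may be taken with $U(0)={\rm Id}_{\mathcal{H}}$ (absorbing the integration constants in $\log c$ and $\tau$) so that $U(t)\in\mathcal{G}_{\mathfrak{W}_{nH}}$, and confirming that the nonvanishing of $c(t)$ keeps $\alpha,\beta,\tau$ finite throughout the interval of interest.
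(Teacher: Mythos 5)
Your proposal is correct and follows essentially the same route as the paper: building the scheme via Lemma~\ref{lem1}, acting with $U(t)$ in the form (\ref{UniTrans}) to obtain (\ref{transformed}), imposing (\ref{Con2}) together with the proportionality condition (\ref{IntCon}) from Lemma~\ref{NoCo}, verifying their compatibility is equivalent to (\ref{FinCon}) (which indeed yields the stated $\alpha$, $\beta$), and concluding with the time reparametrisation $\tau$ and the inverse gauge transformation to reach (\ref{solution}). Your compatibility computation and the bookkeeping about $U(0)={\rm Id}_{\mathcal{H}}$ and the normalisation $l=1$ match the paper's treatment.
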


		\begin{note} It was already noted that to transform $-{\rm i}\widehat H_{nH}(t)$ into an autonomous system up to a time-reparametrization via the group of $S(\mathfrak{W}_{nH},\mathfrak{V}_{nH})$, the condition (\ref{IntCon}) is necessary. Despite that, the condition (\ref{Con2}) is not mandatory and other alternative ones could be developed. This would lead to new integration methods  for (\ref{Ham}).
		\end{note}

		\section{Homogeneous potentials of degree minus two}\label{minos2}
		Proposition \ref{hom} cannot be applied to $t$-dependent Hamiltonian operators of the form (\ref{Ham}) with a homogeneous potential of degree minus two. These potentials include the relevant potential
		\begin{equation*}%\label{CM}
			{U}_{nH}({ \widehat{x}_1},\ldots, { \widehat{x}}_n):=\sum_{j<k}\frac{g}{({ \widehat{x}}_j-{ \widehat{x}}_k)^2},\qquad g\in \mathbb{R}\backslash\{0\},
		\end{equation*}
		of a fluid in a $t$-dependent homogeneous trapping potential \cite{Su97}, which is also a type of Calogero-Moser potential, or the celebrated {Smorodinsky--Winternitz} potentials
		\begin{equation*}%\label{WS}
			{U}_{nH}({ \widehat{x}_1},\ldots, { \widehat{x}}_n)=\sum_{j=1}^n\frac{g_j}{{ \widehat{x}}_j^2},\qquad g_j\in \mathbb{R},\qquad \sum_{j=1}^ng_j^2\neq 0.
		\end{equation*}
		Nevertheless, the procedure to prove Proposition \ref{hom} can be modified to deal with these pathological potentials with $k=-2$. This is the main aim of this section.

		Lemma \ref{lem1}, Lemma \ref{NoCo}, and the expression for $-{\rm i} \widehat{H}'_{nH}(t)=U(t)_\di (-{\rm i}\widehat H_{nH}(t))$ given by (\ref{transformed}) are still valid when the potential under study is homogeneous of degree minus two. Hence, to map $-{\rm i}\widehat H_{nH}(t)$ into a new $t$-dependent skew-Hermitian  operator $-{\rm i}\widehat{H}'_{nH}(t)=-U(t)_\di{\rm i}H_{nH}(t)$ taking values in a one-dimensional subspace of $\mathfrak{V}_{nH}$, it is mandatory to apply the condition 
		$$
		c(t)e^{\alpha(t)(k+2)/2}=l,
		$$
		for a certain non-zero constant $l\in \mathbb{R}$. Since $k=-2$, the function $c(t)$ becomes a constant $c$. If we assume that $\alpha(t)$ and $\beta(t)$ are particular solutions to (\ref{Con2}), the transformed $t$-dependent Hamiltonian operator (\ref{transformed}) takes the form 
		$
		\widehat H'_{nH}(t)=e^{-\alpha(t)}(\widehat{H}_1+c \widehat{H}_4).
		$ Hence, the general solution to the $t$-dependent Schr\"odinger equation related to $\widehat H'_{nH}(t)$, whose explicit $t$-dependence can be removed after a $t$-reparametrisation, reads
		$$
		\psi'(x,t)=\exp\left(\int^t_0e^{-\alpha(t')}dt'(\widehat{H}_1+\widehat{H}_4)\right)\psi'(x),
		$$
		where $\psi'(x)$ is an arbitrary element of $\mathcal{L}^2(\mathbb{R}^n)$ and $\psi'(x,0)=\psi(x)$. Hence, the general solution to (\ref{Ham}) for a homogeneous potential of degree minus two reads as (\ref{solution}) for some solutions $\alpha(t),\beta(t)$ to equations  to (\ref{Con2}), (\ref{IntCon}) and (\ref{FinCon}). This result is summarised in the following proposition.
		
		\begin{proposition}\label{hom1} Every $t$-dependent Hamiltonian operator $\widehat H(t)$ of the form (\ref{Ham}) with a homogeneous potential of order minus two and $c(t)=c\in \mathbb{R}$
			has a general solution (\ref{solution}) where $\beta(t)$ is a particular solution to the  {Riccati} differential equation
			$
			d\beta/dt=\beta^2+\omega^2(t)
			$
			and
			$$
			\alpha(t)=-2\int^t\beta(t')dt', \quad c_0\in \mathbb{R}\backslash\{0\}, \quad \tau(t)=\int^te^{-\alpha(t')}dt'.$$
		\end{proposition}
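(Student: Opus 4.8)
The plan is to specialise the argument that established Proposition \ref{hom} to the degenerate case $k=-2$, where that proof breaks down because the coefficient $(k+2)$ multiplying $d^2c/dt^2$ in equation (\ref{FinCon}) vanishes and the elimination of $\alpha,\beta$ in favour of a single ODE for $c(t)$ becomes vacuous. First I would invoke Lemma \ref{lem1}, whose verification of (\ref{con}) nowhere uses the sign or value of the homogeneity degree, to assert that $S(\mathfrak{W}_{nH},\mathfrak{V}_{nH})$ remains a genuine quantum quasi-Lie scheme for a potential homogeneous of degree $-2$, and that $-{\rm i}\widehat H_{nH}(t)$ takes values in $\mathfrak{V}_{nH}$. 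The transformation formula (\ref{transformed}) and Lemma \ref{NoCo} likewise carry over verbatim, since both were derived for arbitrary $k$.

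The conceptual point — and the step deserving the most care — is the interplay of the two scaling exponents. Under $U(t)$ of the form (\ref{UniTrans}), the adjoint action ${\rm Ad}[\exp({\rm i}\lambda\widehat H_2)]$ rescales $\widehat H_1$ by $e^{-\lambda}$ and $\widehat H_4$ by $e^{k\lambda/2}$; these two exponents coincide \emph{precisely} when $k=-2$. Hence at this degree the coefficients of $\widehat H_1$ and $\widehat H_4$ in $-\widehat H'_{nH}(t)$ both acquire the identical factor $e^{-\alpha(t)}$, so the proportionality required by Lemma \ref{NoCo}, namely $c(t)e^{\alpha(t)(k+2)/2}=l$, collapses to $c(t)=l$. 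This forces $c(t)\equiv c$ to be constant, with no surviving coupling between $\alpha$ and $c$, which is exactly why the ODE (\ref{FinCon}) must be replaced by the bare constancy of $c$ in this case.

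With $c$ constant I would then impose the conditions (\ref{Con2}), i.e. the Riccati equation $d\beta/dt=\beta^2+\omega^2(t)$ together with $d\alpha/dt=-2\beta$, so that the $\widehat H_2$- and $\widehat H_3$-components of (\ref{transformed}) vanish identically and one is left with $\widehat H'_{nH}(t)=e^{-\alpha(t)}(\widehat H_1+c\widehat H_4)$, a curve taking values in the one-dimensional subspace $\langle {\rm i}\widehat H_1+c\,{\rm i}\widehat H_4\rangle\subset\mathfrak{V}_{nH}$ (the constant $c$ being absorbed into the normalisation of $\widehat H_4$, just as $l$ was set to $1$ for Proposition \ref{hom}). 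Since $\widehat H_1+c\widehat H_4$ is now $t$-independent, the time-reparametrisation $\tau(t)=\int^t e^{-\alpha(t')}dt'$ removes the explicit $t$-dependence and the transformed Schr\"odinger equation integrates to $\psi'(t)=\exp(-{\rm i}\tau(t)(\widehat H_1+c\widehat H_4))\psi'(0)$.

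Finally, I would appeal to Theorem \ref{MainT}: since $-{\rm i}\widehat H'_{nH}(t)=U(t)_\di(-{\rm i}\widehat H_{nH}(t))$ and the theorem gives $\psi'(t)=U(t)\psi(t)$, the solution of the original equation is recovered as $\psi(t)=U(t)^{-1}\psi'(t)$, which upon inverting (\ref{UniTrans}) yields exactly the asserted form (\ref{solution}) with $\beta$ solving the Riccati equation, $\alpha(t)=-2\int^t\beta(t')dt'$, and $\tau(t)=\int^t e^{-\alpha(t')}dt'$. The only genuinely delicate point is the verification that the dilation exponents of $\widehat H_1$ and $\widehat H_4$ coincide at $k=-2$; everything else is a transcription of the $k\neq-2$ proof with the constraint on $c$ replaced by its constancy.
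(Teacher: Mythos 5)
Your proposal is correct and follows essentially the same route as the paper: it reuses Lemma \ref{lem1}, Lemma \ref{NoCo}, and formula (\ref{transformed}) unchanged, observes that the condition $c(t)e^{\alpha(t)(k+2)/2}=l$ degenerates at $k=-2$ to the constancy of $c$, imposes (\ref{Con2}) to reduce the transformed operator to $e^{-\alpha(t)}(\widehat H_1+c\widehat H_4)$, and finishes by time-reparametrisation and inversion of $U(t)$ to recover (\ref{solution}). Your explicit remark that the dilation exponents of $\widehat H_1$ and $\widehat H_4$ under ${\rm Ad}[\exp({\rm i}\lambda\widehat H_2)]$ coincide exactly at $k=-2$ is a clean way of explaining what the paper states more tersely, but it is the same argument.
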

		
		Let us apply Proposition \ref{hom1} to the $t$-dependent
		{Hamiltonian} \begin{equation*}%\label{SuthHam}
			\widehat H_{nH}(t)=\frac{1}{2}\sum_{j=1}^n\left({ \widehat{p}}_j^2+K(t)\widehat{  x}_j^2\right)+\sum_{j<k}\frac{\lambda(\lambda-1)}{({ \widehat{x}}_j-{ \widehat{x}}_k)^2},
		\end{equation*}
		whose potential is homogeneous of degree minus two. This $t$-dependent {Hamiltonian} operator was {analysed} in \cite{Su97} to study a quantum one-dimensional fluid in a Paul trap \cite{Paul}. Sutherland provided an Ansatz \cite{Su97} to get a particular solution. Here, we recover some of his results from an algorithmic point of view and show that more useful solutions can be derived to study the properties of the quantum system.
		
		Suppose that we take a $t$-dependent transformation of $\mathcal{G}_\mathfrak{W}$ with functions $\beta(t)$  and $\alpha(t)$ satisfying the differential equations
		\begin{equation}\label{fuc}
			\frac {d\beta}{dt}(t)=\beta^2(t)+K(t),\qquad 
			\frac{d\alpha}{dt}(t)+2\beta(t)=0.
		\end{equation}
		In view of Proposition \ref{hom1}, the $t$-dependent Schr\"odinger equation related to $\widehat H_{nH}(t)$  reads
		$$
		\psi(x,t)=\exp(-\beta(t){\rm i}\widehat H_3)\exp(-\alpha(t){\rm i}\widehat H_2)\exp\left(\int^t_0e^{-\alpha(t')}dt'(\widehat{H}_1+\widehat{H}_4)\right)\psi'(x,0).
		$$
		In particular, if we assume 
		$$
		\psi(x,0)=\prod_{j<k}(x_j-x_k)^\lambda,
		$$
		we obtain 
		$$
		(\widehat{H}_1+\widehat{H}_4)\prod_{j<k}(x_j-x_k)^\lambda=0
		$$
		and hence
		$$
		\psi(x,t)=\exp(-\beta(t){\rm i}\widehat H_3)\exp(-\alpha(t){\rm i}\widehat H_2)\prod_{j<k}(x_j-x_k)^\lambda.
		$$
		Since $\prod_{j<k}(x_j-x_k)^\lambda$ is a homogeneous function with degree $\lambda n(n-1)/2$ and in view of (\ref{homog}), it follows that
		$$
		{\rm i}\widehat{H}_2\prod_{j<k}(x_j-x_k)^\lambda=\left(\frac{n}{4}+\frac{\lambda n(n-1)}4\right)\prod_{j<k}(x_j-x_k)^\lambda,
		$$
		and
		$$
		\psi(x,t)=\exp(-\beta(t){\rm i}\widehat H_3)\exp\left(-\alpha(t)n\frac{(1+\lambda(n-1))}{4}\right)\prod_{j<k}(x_j-x_k)^\lambda.
		$$
		In view of the expression of $\widehat H_3$, it follows that 
		$$
		\psi(x,t)=\exp\left(-{\rm i}\,\beta(t)\frac{\sum_{j=1}^nx_j^2}{2}\right)\exp\left(-\alpha(t)n\frac{(1+\lambda(n-1))}{4}\right)\prod_{j<k}(x_j-x_k)^\lambda.
		$$
		From equation (\ref{fuc}), it turns out that 
		$
		d\phi/dt(t)=\beta(t)\phi(t),
		$
		with $\phi(t):=e^{-\alpha(t)/2}$. Hence, 
		$$
		\psi(x,t)=\exp\left(-{\rm i}\,\beta(t)\frac{\sum_{j=1}^nx_j^2}{2}\right)\phi(t)^{-n\frac{1+\lambda(n-1)}{2}}\prod_{j<k}(x_k-x_j)^\lambda,
		$$
		recovering, in this way, the particular solution given in \cite{Su97}. Nevertheless, our approach is more general and our solution is not retrieved through a particular Ansatz
		as in previous literature.

		\section{Conclusions and outlook}\label{Sec:Sum}

		We have proposed the theory of quantum quasi-Lie schemes as a way to transform an initial $t$-dependent Schr\"{o}dinger equation into another one, which can be described by means of the usual 
		theory of quantum Lie systems. This enables us to investigate a larger set of $t$-dependent Schr\"{o}dinger equations than just those related to quantum Lie systems. We have also shown that the theory of quasi-Lie systems admits an equivalent  {generalisation} to that of the theory of Lie systems to the quantum framework.
		
		As a particular instance, we have applied our theory to answer a question made in a paper by Perelomov \cite{Pe78} about the possibility of relating a
		quantum $t$-dependent {Hamiltonian} for a nonlinear oscillator to a quantum $t$-independent one of the same type. We have also obtained the solution of a quantum one-dimensional fluid in a Paul trap.
		
		\section*{Acknowledgements}
		Research of J. de Lucas founded by the project under the contract number 291/2017.
		Partial financial support by research projects  {  PGC2018-098265-B-C31 (MICINN)  and   E38/17R (DGA)}
		are acknowledged. C. Sardón acknowledges project ``Teoría de aproximación constructiva y aplicaciones'' (TACA-ETSII). 
		\qquad
		
	\end{document}